\theoremstyle{plain}
\newtheorem{assumption}{Assumption}
\newtheorem{lemma}{\textbf{Lemma}}
\newtheorem{theorem}{\textbf{Theorem}}
\newtheorem{proposition}{\textbf{Proposition}}
\numberwithin{lemma}{section} 
\numberwithin{theorem}{section} 
\numberwithin{proposition}{section}
\definecolor{brown}{RGB}{160,82,45}
\definecolor{steelblue}{RGB}{70,130,180}
\definecolor{darkred}{RGB}{139,0,0}
\crefname{assumption}{assumption}{assumptions}
\newcommand\acknowledge[1]{%
  \begingroup
  \renewcommand\thefootnote{}\footnote{#1}%
  \addtocounter{footnote}{-1}%
  \endgroup
}
\def\realine{\mathbb{R}}
\def\indep{\!\perp\!\!\!\perp}
\def\cond{\ \vert\ }
\def\Pr{\mathbb{P}}
\def\Exp{\mathbb{E}}
\def\sumi{\hbox{$\sum_{i=1}^n$}}
\def\1{\mathbb{1}}
\def\Indi{\mathbb{1}}
\title{Testing for unobserved heterogeneous treatment effects}
\author[]{Yu-Chin Hsu\thanks{ Department of Economics, Academia Sinica. Email: \href{mailto:}{ychsu@econ.sinica.edu}.}}
\author[]{Ta-Cheng Huang\thanks{ (corresponding author) Department of Economics, Texas A{\&}M University. Email: \href{mailto:}{tchuang5@tamu.edu}.}}
\author[]{Haiqing Xu\thanks{ Department of Economics, University of Texas at Austin. Email: \href{mailto:}{h.xu@austin.utexas.edu}}}
\affil{}
\date{\today}
\begin{document}
\begin{titlepage}
\maketitle
\thispagestyle{empty}

\begin{abstract}
\fontsize{10}{15pt}\selectfont 
Unobserved heterogeneous treatment effects have been emphasized in recent policy evaluation literature. In this paper, we extend \cite{LW_2014_JoE}'s testing method for  unobserved heterogeneous treatment effects by developing nonparametric tests under  the standard exogenous instrumental variable assumption and allowing for endogenous treatment.  Specifically, we propose Kolmogorov--Smirnov--type statistics that are consistent and simple to implement.  To illustrate, we apply the proposed test method with two empirical applications: treatment effects of job training program on earnings as well as the impact of fertility on family income. The null hypotheses, i.e., lack of unobserved heterogeneous treatment effects,  cannot be rejected at a 10\% significance level in the former case, but should be rejected at all usual significance levels in the latter. 

\bigskip
\noindent \textbf{\textit{Keywords}}: Specification test, nonseparability, unobserved heterogeneous treatment effects

\bigskip
\noindent \textbf{\textit{JEL codes}}: $C12$, $C14$, $C31$

\end{abstract}
\acknowledge{Acknowledgment: The third author would like to dedicate this paper to the memory of Professor Halbert White. We thank Qi Li and Quang Vuong for useful comments. }

\end{titlepage}


\newpage
\setcounter{page}{1}
\section{Introduction}
\fontsize{12}{18pt}\selectfont 
Unobserved heterogeneous treatment effects  have been emphasized in recent policy evaluation literature. See e.g. \cite{heckman1997making,matzkin2003nonparametric,Chesher_2003_ECMA,Chesher_2005_ECMA,CH_2005_ECMA,imbens2009identification}, \cite{heckman2001policy, heckman2005structural}. Recently,   \cite{LW_2014_JoE} and \cite{STU_2014_ER} develop nonparametric tests for unobserved heterogenous treatment effects via testing for additive separability of the error term in the structural relationship. A key assumption in their approaches is to assume  that treatment is (conditional) independent of the error term. Motivated by \cite{LW_2014_JoE},  in this paper we propose nonparametric tests under  the standard exogenous instrumental variable assumption and allowing for endogenous treatment.

In this paper, we consider the following structure model
\[
Y = g(D, X, \epsilon)
\]
where $Y$ is the outcome variable of interest, $X$ is a vector of observed covariates, $D$ denotes the binary treatment status, and $\epsilon$ is an unobserved error term of general form. In particular, $\epsilon$ represents the unobserved individual heterogeneity and we allow for the correlation between $\epsilon$ and $D$. Such a structural relationship is nonseparable in $\epsilon$, which implies treatment effect from $D$ on $Y$ varies across individuals, even after we control for observed heterogeneity $X$. When there is no unobserved heterogeneous treatment effects, we show that the structural model can be represented by 
\[
Y= m(D,X) + \nu(X, \epsilon)
\]
for some measurable functions $m$ and $\nu$. With additive separability of the error term, treatment effects are the same across individuals with the same covariates.  Formally, we test such an additive separability of the structural model.

A key feature of our approach is to allow for the presence of treatment endogeneity. Due to the sample selection issue highlighted in \cite{heckman1979sample},  the treatment status $D$ and error term $\epsilon$  are statistically dependent on each other given $X$. With additive separability, identification and estimation of average treatment effects directly obtain by \cite{IA_1994_ECMA}'s  \textit{``Local Average Treatment Effects''} and \cite{heckman1999local}'s  {\it Marginal Treatment effect} (MTE). For instance, \cite{angrist1991does} use a two--stage least square approach to estimate treatment effects in a linear specification.  
As is pointed out by \cite{heckman1997making}, however, the conventional assumption of identical treatment effects across individuals, while convenient, is implausible. If so, then the usual linear parametric specification with additive errors is not only for  feasibility and simplification of estimation, but also essential for identification as well as interpretations of treatment effects.


Though unobserved heterogeneity is crucially important, there are only a handful of papers studying testing for the existence of. 
In a framework without endogeneity, \cite{heckman1997making} focus on observed heterogeneous treatment effects. In particular, they provide tests for whether treatment effects depend on observed exogenous covariates via testing  zero--variance of $g(1,X,\epsilon)-g(0,X,\epsilon)$. Recently,  \cite{Hoderlein2009on} briefly discuss specification tests for unobserved heterogeneity in structured nonseparable models. This paper is intrinsically motivated by \cite{LW_2014_JoE} and \cite{STU_2014_ER}, who study nonparametric testing for unobserved heterogeneous treatment effects under the unconfoundedness assumption. In particular,   \cite{LW_2014_JoE} test such a hypotheses via testing an equivalent independence condition on observables under additional weak assumptions. Our paper extends \cite{LW_2014_JoE} by representing the existence of unobserved heterogeneous treatment effects by a similar conditional independence restriction on observables, allowing for the endogeneity of treatment variable.


Another closely related paper is \cite{HSU_2010_JoE} who study testing for the absence of selection on the gain to treatment in the generalized Roy model framework,  allowing for unobserved heterogenous treatment effect. This paper complements to  \cite{HSU_2010_JoE} in the sense that presence of both unobserved heterogeneity and selection into treatment is the so--called ``essential heterogeneity'' in \cite{heckman2006understanding}.

The proposed testing approach distinguishes the cases where exogenous  covariates contains or not a continuously distributed element. We propose Kolmogorov--Smirnov type statistics that are consistent and simple to implement. Motivated by \cite{stinchcombe1998consistent}, when there is a continuous covariate, we modify the classic Kolmogorov--Smirnov test statistics by using primitive functions of CDF to represent probability distributions of a (nonparametrically) generated variable.  Such a modification is novel and plays a key role for developing our test statistics. Moreover, we establish the asymptotic properties of the proposed test statistics under the null and alternative hypothesis.

The paper is organized as follows.  In Section 2, we introduce the framework and motivate our testing idea. \Cref{sec:condtional_indep} discusses our test statistics and main asymptotic results.  We distinguish the cases whether covariates include continuous variables. \Cref{sec:simulations} presents Monte Carlo experiments to study finite sample performance of our test statistics.   \Cref{sec:empirical} applies our testing approach to two empirical applications:  treatment effects of a job training program and treatment effects of fertility on earnings.
All proofs are collected  in the Appendix.

\section{Model and Testable Restrictions}
\label{sec:framework_motivation}
We consider the following  nonparametric nonseparable model:
\begin{equation}
\label{model}
Y=g(D,X,\epsilon)
\end{equation}
where $Y \in \mathbb{R}$, $D\in\{0,1\}$ and $X\in\mathbb R^{d_X}$ are observables,  $\epsilon\in\mathbb R^{d_\epsilon}$ is an unobserved random disturbance of general form, and $g$ is an unknown but smooth function defined on $\{0,1\}\times \mathscr S_{X\epsilon}$. In particular, $D$ is an endogenous treatment variable that is correlated with $\epsilon$ due to the selection issue. See e.g. \cite{heckman1997making}. 
To deal with the endogeneity issue, we follow the literature by introducing a binary instrumental variable $Z\in \{0,1\}$. Throughout the paper, we use upper case letters to denote random variables, and their corresponding lower case letters to stand for the realizations. 
Moreover, we use $\mathscr{S}_A$ for the support of a vector of generic random variables $A$. 

Note that the non-additivity of the structural relationship $g$ in $\epsilon$ captures the idea that individual treatment effect, i.e., $g(1,X,\epsilon)-g(0,X,\epsilon)$, depends on the unobserved individual heterogeneity $\epsilon$,  even after one controls for $X$ covariates.  Following \cite{LW_2014_JoE}, the null hypothesis for testing heterogeneous individual treatment effects  is equivalent to testing the following null hypothesis:  
\[
\mathbb{H}_{0}:\ g(D,X, \epsilon) = m(D,X) + \nu(X, \epsilon),
\] 
where $m: \mathscr{S}_{DX} \mapsto \realine$ and $\nu:\mathscr{S}_{X\epsilon} \mapsto \realine$.
Such an equivalence is summarized formally  in the following proposition.
\begin{proposition} 
\label{prop:equivalence bw hetero and separability}
Suppose  \eqref{model} holds, then $\mathbb{H}_0$ holds if and only if 
\begin{equation} 
\label{counterfactual}
g(1,X,\cdot) - g(0,X,\cdot)=\delta(X),
\end{equation}
holds for some measurable function $\delta(\cdot):\mathscr{S}_X \mapsto \mathbb{R}$.
\end{proposition}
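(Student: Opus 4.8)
The plan is to establish the two implications separately; both are elementary and rest on the fact that $D$ takes only the values $0$ and $1$, so that $g(D,X,\epsilon)$ can be written as a convex combination of $g(0,X,\epsilon)$ and $g(1,X,\epsilon)$.

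For the ``only if'' direction, I would assume $\mathbb{H}_0$, i.e.\ $g(D,X,\epsilon)=m(D,X)+\nu(X,\epsilon)$, then evaluate at $D=1$ and $D=0$ and subtract: the term $\nu(X,\epsilon)$ cancels, leaving $g(1,X,\cdot)-g(0,X,\cdot)=m(1,X)-m(0,X)$, which does not depend on $\epsilon$. Setting $\delta(X):=m(1,X)-m(0,X)$ yields \eqref{counterfactual}, and measurability of $\delta$ is inherited from that of $m$.

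For the ``if'' direction, I would start from the pointwise identity $g(D,X,\epsilon)=(1-D)\,g(0,X,\epsilon)+D\,g(1,X,\epsilon)$, valid because $D\in\{0,1\}$, and rewrite it as $g(D,X,\epsilon)=g(0,X,\epsilon)+D\cdot\bigl(g(1,X,\epsilon)-g(0,X,\epsilon)\bigr)$. Under \eqref{counterfactual} the bracketed difference equals $\delta(X)$, so $g(D,X,\epsilon)=g(0,X,\epsilon)+D\,\delta(X)$. I would then define $m(D,X):=D\,\delta(X)$ on $\mathscr{S}_{DX}$ and $\nu(X,\epsilon):=g(0,X,\epsilon)$ on $\mathscr{S}_{X\epsilon}$, which delivers exactly the additively separable representation in $\mathbb{H}_0$.

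There is essentially no serious obstacle here; the only point requiring a line of care is measurability of the constructed functions. In the ``if'' direction $\nu(X,\epsilon)=g(0,X,\epsilon)$ is measurable because $g$ is assumed smooth (hence Borel measurable) and $m(D,X)=D\,\delta(X)$ is measurable because $\delta$ is; in the ``only if'' direction measurability of $\delta$ follows from that of $m$. It is worth emphasizing that the argument uses the binary structure of $D$ in an essential way---the convex-combination identity is what allows $g(0,X,\epsilon)$ to play the role of $\nu(X,\epsilon)$---so the same reasoning would not extend verbatim to a multi-valued or continuous treatment without additional restrictions.
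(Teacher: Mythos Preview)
Your proof is correct and follows essentially the same argument as the paper: both directions proceed exactly as you outline, with $\delta(X)=m(1,X)-m(0,X)$ for the ``only if'' part and the decomposition $g(D,X,\epsilon)=g(0,X,\epsilon)+D\,\delta(X)$ together with the choice $m(d,x)=d\,\delta(x)$, $\nu(x,\epsilon)=g(0,x,\epsilon)$ for the ``if'' part. Your additional remarks on measurability and the role of the binary structure of $D$ are sound and slightly more careful than the paper's version.
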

\noindent
Note that under $\mathbb{H}_0$, $\delta(x) = m(1,x)-m(0,x)$, which represents homogenous individual treatment effects across individuals with the same value of covariates. \Cref{prop:equivalence bw hetero and separability}  shows that the additive separability of \eqref{model} is equivalent to  homogenous individual treatment effects. A similar result can be found in \cite{LW_2014_JoE}.




Another key insight from \cite{LW_2014_JoE} is the equivalence between the additive separability hypotheses and a conditional independence restriction on observables under the unconfoundedness assumption. 
Following \cite{LW_2014_JoE}, we derive a similar set of model restrictions in the presence of endogeneity.  For each $x\in\mathscr S_X$ and $z=0,1$, let $p(x,z) = \Pr (D=1|X=x,Z=z)$ be the {\it propensity score}.
\begin{assumption} 
\label{assmp:relevant IV}
Suppose $Z \indep \epsilon \cond X$ and $p(x,0) \neq p(x,1)$ for all $x \in \mathscr{S}_X$. 
Without loss of generality, let $p(x,0) < p(x,1)$ for all $x \in \mathscr{S}_X$.
\end{assumption} 

\begin{assumption}[Single--index error term]
\label{assmp:single index error}
There exist measurable functions $\tilde g: \mathscr{S}_{DX}\times \mathbb R\mapsto \mathbb{R}$ and $\nu:\mathscr{S}_{X\epsilon} \mapsto \realine$ such that 
\[
g(D,X,\epsilon) = \tilde g(D,X,\nu(X,\epsilon)).
\] Moreover, $\tilde g(d,x,\cdot)$  is strictly increasing in the scalar--valued index $\nu$ for $d=0,1$ and all $x \in \mathscr{S}_X$.
\end{assumption}
\noindent \Cref{assmp:relevant IV} requires the instrumental variable $Z$ to be (conditionally) exogenous and relevant,  which is standard in the literature.
See e.g., \cite{IA_1994_ECMA}, \cite{CH_2005_ECMA}, \cite{vuong2014counterfactual}, and references therein.   \Cref{assmp:single index error} imposes monotonicity on the structural relationship, which has also been widely assumed in the literature of nonseparable models.  For instance, \cite{matzkin2003nonparametric} \cite{Chesher_2003_ECMA} and \cite{CH_2005_ECMA} assume the structural function $g$ is strictly increasing in the scalar--valued error term $\epsilon$. 
It is worth pointing out that \Cref{assmp:single index error} holds under  $\mathbb{H}_0$. 

Under \Cref{assmp:relevant IV}, $\mathbb{H}_0$ implies that $\Exp(Y|X,Z=z) = \Exp\left[g(0,X,\epsilon)|X\right] + \delta(X) \times p(X,z)$ for $z=0,1$. Thus,  for each $x\in\mathscr S_X$, $\delta(x)$ can be identified by:
\begin{equation} 
\label{eqn:late}
\delta(x) \equiv \frac{\mu(x,1) - \mu(x,0)}{p(x,1)-p(x,0)},
\end{equation}
where $\mu(x,z) = \mathbb{E}(Y|X=x,Z=z)$ for $z=0,1$.
Note that $\delta(x)$ takes the conditioal version of LATE in \cite{IA_1994_ECMA}.

Let $W \equiv Y + (1-D) \times \delta(X)$. Given that \Cref{assmp:relevant IV,assmp:single index error} hold, by \Cref{prop:equivalence bw hetero and separability}, $\mathbb H_0$ implies that $W = g(1,X,\epsilon)$. We can also obtain $g(0,X,\epsilon)$ by a similar argument. Therefore,  by \Cref{assmp:relevant IV}, $W$ is conditionally independent of $Z$ given $X$. The next lemma summarizes above discussion.
\begin{lemma} 
\label{lemma:equivalence bw separability and conditional Indep}
Suppose \eqref{model}, and \Cref{assmp:relevant IV,assmp:single index error} hold. 
Then,  $\mathbb{H}_0$ implies that $W \indep Z \cond X$.
\end{lemma}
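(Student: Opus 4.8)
The plan is to show that, under $\mathbb{H}_0$, the auxiliary variable $W$ is identically equal to the counterfactual outcome $g(1,X,\epsilon)$; once that is in hand, the conditional exogeneity of $Z$ in \Cref{assmp:relevant IV} yields $W\indep Z\cond X$ at once. First I would invoke \Cref{prop:equivalence bw hetero and separability}: under $\mathbb{H}_0$ there is a measurable $\delta^{*}:\mathscr S_X\to\realine$ with $g(1,x,\cdot)-g(0,x,\cdot)=\delta^{*}(x)$, so that $g(d,x,e)=g(0,x,e)+d\,\delta^{*}(x)$ for $d\in\{0,1\}$, and hence $Y=g(0,X,\epsilon)+D\,\delta^{*}(X)$ almost surely.

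Next I would check that this structural object $\delta^{*}$ coincides with the observable ratio $\delta$ defined in \eqref{eqn:late}. Taking conditional expectations of $Y=g(0,X,\epsilon)+D\,\delta^{*}(X)$ given $(X,Z)=(x,z)$ and using $Z\indep\epsilon\cond X$ from \Cref{assmp:relevant IV}, the term $\Exp[g(0,x,\epsilon)\cond X=x,Z=z]$ does not depend on $z$, while $\Exp[D\cond X=x,Z=z]=p(x,z)$; differencing the $z=1$ and $z=0$ identities gives $\mu(x,1)-\mu(x,0)=\delta^{*}(x)\,[p(x,1)-p(x,0)]$, and since $p(x,1)\neq p(x,0)$ for every $x$ we may divide to obtain $\delta^{*}(x)=\delta(x)$. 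Substituting back, $W=Y+(1-D)\,\delta(X)=g(0,X,\epsilon)+D\,\delta^{*}(X)+(1-D)\,\delta^{*}(X)=g(0,X,\epsilon)+\delta^{*}(X)=g(1,X,\epsilon)$, exactly as the discussion preceding the lemma asserts.

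Finally, since $W=g(1,X,\epsilon)$ is a measurable function of $(X,\epsilon)$, for every bounded measurable $h$ we have $\Exp[h(W)\cond X,Z]=\Exp[h(g(1,X,\epsilon))\cond X,Z]=\Exp[h(g(1,X,\epsilon))\cond X]$ by $Z\indep\epsilon\cond X$, which is $\sigma(X)$-measurable; this is precisely $W\indep Z\cond X$. The step needing the most care is the identification of $\delta^{*}$ with $\delta$: it presumes integrability of $Y$ so that $\mu(x,z)$ is well defined, and it leans on both parts of \Cref{assmp:relevant IV} — mean independence to strip $z$ out of $\Exp[g(0,x,\epsilon)\cond X=x,Z=z]$, and the nonzero first-stage gap $p(x,1)-p(x,0)$ to invert the differenced equation — so that the observable ratio really equals the structural treatment effect. \Cref{assmp:single index error} is not used for this direction, which is consistent with the earlier remark that it holds automatically under $\mathbb{H}_0$; the remaining manipulations are routine.
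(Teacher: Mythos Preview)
Your proposal is correct and follows essentially the same argument as the paper, which proves this lemma in the discussion immediately preceding its statement: use \Cref{prop:equivalence bw hetero and separability} to write $Y=g(0,X,\epsilon)+D\,\delta^{*}(X)$, invoke \Cref{assmp:relevant IV} to identify $\delta^{*}=\delta$ via the differenced conditional means, conclude $W=g(1,X,\epsilon)$, and then apply $Z\indep\epsilon\cond X$. Your observation that \Cref{assmp:single index error} is not needed for this direction is also consistent with the paper's remark that it holds automatically under $\mathbb{H}_0$.
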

\noindent
\Cref{lemma:equivalence bw separability and conditional Indep} derives a testable model restriction under $\mathbb H_0$, i.e.,  $W \indep Z\ \vert\ X$. Note that $W$ can be consistently estimated by $Y + (1-D)\times\hat{\delta}(X)$, where $\hat{\delta}(X)$ is a consistent nonparametric estimator of $\delta(X)$. To provide a consistent test, we next provide weak conditions under which the conditional independence restriction is also sufficient for testing $\mathbb H_0$. 
\begin{assumption}[Monotone selection]
\label{assmp:monotone selection}
The selection to the treatment is given by
\begin{equation}
\label{selection}
D = \1 \left[ \theta(X, Z) - \eta \geq 0 \right], 
\end{equation} 
where $\theta$ is an unknown smooth function and $\eta \in \realine$ is an  unobserved error term.
\end{assumption}
\noindent
\cite{IA_1994_ECMA} first introduce the assumption that the selection to the treatment is monotone which implies the ``no defier'' condition. \cite{Vytlacil_2002_ECMA}  shows that such a monotonicity condition is observationally equivalent to the weak monotonicity of \eqref{selection} in the error term $\eta$. Furthermore, \cite{vuong2014counterfactual}  show that \Cref{assmp:monotone selection} can be relaxed to the strict monotonicity of $\mathbb P(Y\leq y;D=1|X,Z=1)-\mathbb P(Y\leq y;D=1|X,Z=0)$ in $y\in\mathscr S^\circ_{Y|X,D=1}$. 
 
Under \Cref{assmp:relevant IV}, note that $\theta(x,0) < \theta(x,1)$ for all $x\in\mathscr S_X$. Let $\mathcal C_x \equiv \{ \eta \in \mathbb{R}: \theta(x,0)<\eta \leq \theta(x,1)\}$ be the ``complier group'' given $X=x$ \citep[see][]{IA_1994_ECMA}.
\begin{assumption}
\label{assmp:sufficiently large compliers}
The support of $g(d,x,\epsilon)$ given $X=x$ and the complier group $\mathcal C_x$ equals to the support of $g(d,x,\epsilon)$ given $X=x$, i.e., 
\[
\mathscr{S}_{g(d,x,\epsilon) | X=x,\ \eta \in \mathcal C_x} = \mathscr{S}_{g(d,x,\epsilon)|X=x}.
\]
\end{assumption}

\noindent
\Cref{assmp:sufficiently large compliers} is a support condition. This assumption is testable, since the distribution of $g(d,x,\epsilon)$ given  $X=x$ and $\eta\in\mathcal C_x$ can be identified, see, e.g., \cite{IR_1997_ReStud}.
Specifically, for all $t\in\mathbb R$,
\[
F_{g(d,x,\epsilon)|X=x,\eta\in \mathcal C_x}(t)= \frac{\Pr(Y\leq t, D=d|X=x, Z=1)-\Pr(Y\leq t, D=d|X=x,Z=0)}{\Pr(D=d|X= x, Z=1)-\Pr(D=d|X=x, Z=0)},
\]
from which we can identify the support $\mathscr S_{g(d,x,\epsilon)|X=x,\eta\in\mathcal C_x}$. 

\begin{theorem} 
\label{thm:equivalence bw separability and conditinoal Indep}
Suppose \eqref{model}, and \Cref{assmp:relevant IV,assmp:monotone selection,assmp:single index error,assmp:sufficiently large compliers} hold.  
Then $\mathbb{H}_0$ holds if and only if  $W \!\perp\!\!\!\perp Z|X$.
\end{theorem}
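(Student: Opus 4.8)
The ``only if'' direction is immediate from \Cref{lemma:equivalence bw separability and conditional Indep} (which already uses \Cref{assmp:relevant IV,assmp:single index error}), so the plan is to prove the converse: assuming $W\indep Z\cond X$, I would establish \eqref{counterfactual} and then invoke \Cref{prop:equivalence bw hetero and separability}. Note first that $\delta(\cdot)$, defined by the ratio \eqref{eqn:late}, is well defined by \Cref{assmp:relevant IV} regardless of whether $\mathbb H_0$ holds, so $W=Y+(1-D)\delta(X)$ is a genuine observable. Write $V\equiv\nu(X,\epsilon)$ for the scalar index of \Cref{assmp:single index error}, so that $g(d,X,\epsilon)=\tilde g(d,X,V)$ with $\tilde g(d,x,\cdot)$ strictly increasing and, by the maintained smoothness, continuous for $d=0,1$ and each $x$.

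Fix $x\in\mathscr S_X$. By \Cref{assmp:monotone selection}, $\{D=1\}=\{\eta\le\theta(X,Z)\}$, and by the exogeneity of $Z$ the conditional law of $(V,\eta)$ given $(X=x,Z=z)$ does not depend on $z$. On $\{X=x\}$ one has $W=D\,\tilde g(1,x,V)+(1-D)\big[\tilde g(0,x,V)+\delta(x)\big]$; splitting $\{W\le t\}$ according to the value of $D$, writing out $\Pr(W\le t\mid X=x,Z=z)$ for $z=0,1$, and differencing (here $\theta(x,0)<\theta(x,1)$ makes the $D=1$ and $D=0$ pieces collapse onto the complier event $\mathcal C_x$ and cancels the non-complier contributions) gives
\begin{align*}
&\Pr(W\le t\mid X=x,Z=1)-\Pr(W\le t\mid X=x,Z=0)\\
&\qquad=\Pr\big(\tilde g(1,x,V)\le t,\ \eta\in\mathcal C_x\mid X=x\big)-\Pr\big(\tilde g(0,x,V)+\delta(x)\le t,\ \eta\in\mathcal C_x\mid X=x\big).
\end{align*}
Under $W\indep Z\cond X$ the left-hand side vanishes for every $t$, so, dividing by $\Pr(\eta\in\mathcal C_x\mid X=x)=p(x,1)-p(x,0)>0$, conditional on $X=x$ and $\eta\in\mathcal C_x$ the variables $\tilde g(1,x,V)$ and $\tilde g(0,x,V)+\delta(x)$ have the same distribution.

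Next I would convert this distributional equality into a pointwise identity using monotonicity. Since $\tilde g(1,x,\cdot)$ and $\tilde g(0,x,\cdot)$ are strictly increasing and continuous, evaluating either at the $\tau$-quantile $Q(\tau)$ of the conditional law of $V$ given $(X=x,\eta\in\mathcal C_x)$ returns the $\tau$-quantile of the corresponding transformed variable; equality of the two laws therefore forces $\tilde g(1,x,Q(\tau))=\tilde g(0,x,Q(\tau))+\delta(x)$ for all $\tau\in(0,1)$. Because $\{Q(\tau):\tau\in(0,1)\}$ is dense in $\mathscr S_{V\mid X=x,\eta\in\mathcal C_x}$ and $v\mapsto\tilde g(1,x,v)-\tilde g(0,x,v)$ is continuous, this gives $\tilde g(1,x,v)-\tilde g(0,x,v)=\delta(x)$ for every $v\in\mathscr S_{V\mid X=x,\eta\in\mathcal C_x}$. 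Finally, since the continuous strictly increasing map $\tilde g(d,x,\cdot)$ carries the support of $V$ under any conditioning event onto the support of $\tilde g(d,x,V)$ under that event, \Cref{assmp:sufficiently large compliers} is equivalent to $\mathscr S_{V\mid X=x,\eta\in\mathcal C_x}=\mathscr S_{V\mid X=x}$, and hence $\tilde g(1,x,v)-\tilde g(0,x,v)=\delta(x)$ for all $v\in\mathscr S_{V\mid X=x}$ and all $x\in\mathscr S_X$. This is precisely \eqref{counterfactual}, and \Cref{prop:equivalence bw hetero and separability} then yields $\mathbb H_0$.

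The step I expect to be the main obstacle is the passage from ``equal in distribution'' to ``equal on the support'': one must argue carefully that two strictly increasing continuous transformations of the same random variable inducing the same law must agree on its support --- atoms and gaps in the complier law of $V$ are harmless, but the argument has to run through the quantile transform rather than the CDF --- and one must verify that \Cref{assmp:sufficiently large compliers}, stated in terms of $g(d,x,\epsilon)$, is genuinely equivalent to the support condition for the index $V$ that is used. The differencing identity of the second paragraph, by contrast, is routine once the threshold-crossing representation of \Cref{assmp:monotone selection} and the exogeneity of $Z$ are in place.
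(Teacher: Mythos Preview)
Your proposal is correct and follows essentially the same route as the paper: split the event $\{W\le t\}$ by $D$, difference across $Z$ so that only the complier group $\mathcal C_x$ survives, and then use strict monotonicity of $\tilde g(d,x,\cdot)$ together with \Cref{assmp:sufficiently large compliers} to pass from a distributional equality to the pointwise identity $\tilde g(1,x,v)-\tilde g(0,x,v)=\delta(x)$ on $\mathscr S_{V|X=x}$. The only cosmetic difference is that the paper inverts the strictly monotone complier sub-CDF $\Delta_d(\tau,x)=\Pr(V\le\tau,\eta\in\mathcal C_x\mid X=x)$ directly (matching two representations of the same quantity to read off $\tilde g(0,x,\tilde g^{-1}(1,x,y))=y-\delta(x)$), whereas you phrase the same inversion via the quantile transform; both arguments use exactly the same ingredients and the same assumptions at the same places.
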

\noindent
Throughout, we maintain \Cref{assmp:relevant IV,assmp:monotone selection,assmp:single index error,assmp:sufficiently large compliers}.  By \Cref{thm:equivalence bw separability and conditinoal Indep}, $\mathbb H_0$ can be tested by testing  the conditional independence condition in the theorem.  As a matter of fact, \Cref{thm:equivalence bw separability and conditinoal Indep} is the basis of our approach to test for (unobserved) heterogeneity in treatment effects.

\subsection{Discussion: testing for full additive separability}
One might be also interested in testing for the (full) additive separability of the error term in the outcome equation \eqref{model}, which has been widely used in the empirical treatment effect literature. \cite{LW_2014_JoE} and \cite{STU_2014_ER} consider testing additive error structure under the unconfoundedness assumption. Specifically,  their null hypothesis are given by
\[
\mathbb{H}^\dag_{0}:\ g(D,X, \epsilon) = m^\dag(D,X) + \epsilon
\] for some measurable function $m^{\dag}$. Clearly,  $\mathbb{H}^\dag_0$ is more restrictive than our null hypotheses $\mathbb H_0$. In particular, $\mathbb H^\dag_0$ rules out both observed and unobserved heterogeneity, which has been widely discussed in the treatment effect literature. See e.g. \cite{heckman2007econometric} and \cite{imbens2009recent}.

Following \cite{LW_2014_JoE}  and \Cref{thm:equivalence bw separability and conditinoal Indep}, we derive a similar set of conditional independence restrictions that are equivalent to $\mathbb{H}^\dag_0$.
\begin{theorem} 
\label{thm:heterogeneity and heteroscadesticity}
Suppose \Cref{assmp:relevant IV,assmp:monotone selection,assmp:single index error,assmp:sufficiently large compliers} hold. Suppose in addition $X$ and $\epsilon$ is independent, i.e.,  $X \indep \epsilon$. 
Then $\mathbb{H}^\dag_0$ holds if and only if 
\begin{equation}
\label{eqn:W* indep X,Z}
W-\Exp(W|X) \indep (X,Z).
\end{equation}
\end{theorem}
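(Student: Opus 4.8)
The plan is to deduce the statement from \Cref{thm:equivalence bw separability and conditinoal Indep} together with a location normalization, the only new input being the maintained hypothesis $X\indep\epsilon$. Throughout I write $W^{*}\equiv W-\Exp(W|X)$ and recall, from the discussion preceding \Cref{lemma:equivalence bw separability and conditional Indep}, that whenever $\mathbb H_{0}$ holds one has $W=g(1,X,\epsilon)$ and $\delta(X)=m(1,X)-m(0,X)$. Two elementary facts will be used repeatedly. First, \Cref{assmp:relevant IV} ($Z\indep\epsilon\cond X$) together with $X\indep\epsilon$ yields the \emph{joint} independence $\epsilon\indep(X,Z)$ (the standard argument: conditioning a conditionally independent pair on a variable independent of the common factor gives joint independence). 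Second, since $\Exp(W|X)$ is a function of $X$ alone, conditionally on $X$ the variables $W$ and $W^{*}$ differ by a constant, so $W\indep Z\cond X$ is equivalent to $W^{*}\indep Z\cond X$.

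\emph{Necessity} ($\mathbb H_{0}^{\dag}\Rightarrow$ \eqref{eqn:W* indep X,Z}). Suppose $g(D,X,\epsilon)=m^{\dag}(D,X)+\epsilon$. This is a special case of $\mathbb H_{0}$ (take $m=m^{\dag}$, $\nu(X,\epsilon)=\epsilon$), so $W=g(1,X,\epsilon)=m^{\dag}(1,X)+\epsilon$; by $X\indep\epsilon$ this gives $\Exp(W|X)=m^{\dag}(1,X)+\Exp(\epsilon)$, hence $W^{*}=\epsilon-\Exp(\epsilon)$. Thus $W^{*}$ is a deterministic function of $\epsilon$ alone, and since $\epsilon\indep(X,Z)$ by the first fact above, \eqref{eqn:W* indep X,Z} follows.

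\emph{Sufficiency} (\eqref{eqn:W* indep X,Z}$\Rightarrow\mathbb H_{0}^{\dag}$). Suppose $W^{*}\indep(X,Z)$. In particular $W^{*}\indep Z\cond X$, hence $W\indep Z\cond X$, so \Cref{thm:equivalence bw separability and conditinoal Indep} yields $\mathbb H_{0}$: $g(D,X,\epsilon)=m(D,X)+\nu(X,\epsilon)$ with $W=g(1,X,\epsilon)=m(1,X)+\nu(X,\epsilon)$. Therefore
\[
W^{*}=W-\Exp(W|X)=\nu(X,\epsilon)-\Exp\big(\nu(X,\epsilon)\,\big|\,X\big),
\]
so, putting $m^{\dag}(D,X)\equiv m(D,X)+\Exp\big(\nu(X,\epsilon)\mid X\big)$ and $\tilde\epsilon\equiv W^{*}$, we get $g(D,X,\epsilon)=m^{\dag}(D,X)+\tilde\epsilon$. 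It remains to verify that $\tilde\epsilon$ has the exogeneity required of the disturbance in $\mathbb H_{0}^{\dag}$: by hypothesis $\tilde\epsilon=W^{*}\indep(X,Z)$, which in particular delivers $\tilde\epsilon\indep X$ and $\tilde\epsilon\indep Z\cond X$, exactly the properties assumed of $\epsilon$. Relabeling $\tilde\epsilon$ as the model's scalar disturbance then gives $\mathbb H_{0}^{\dag}$.

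The step I expect to require the most care is sufficiency, specifically the passage from "$\nu$ may \emph{a priori} depend on $X$" to a genuinely additive form: the identity $W^{*}=\nu(X,\epsilon)-\Exp(\nu(X,\epsilon)\mid X)$ shows that the only $X$-dependence of the index $\nu(X,\epsilon)$ is a location term, which can be folded into $m^{\dag}$ precisely because \eqref{eqn:W* indep X,Z} forces the centered index $W^{*}$ to be independent of $X$. One should also make sure \eqref{eqn:W* indep X,Z} is used in full — the component $W^{*}\indep Z\cond X$ is what invokes \Cref{thm:equivalence bw separability and conditinoal Indep}, while $W^{*}\indep X$ is what removes the residual $X$-dependence of the error — and that the integrability implicit in writing $\Exp(W|X)$ is maintained. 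Finally, a sentence in the write-up should flag the conventional identification of the "error of general form" $\epsilon$ with the scalar unobservable appearing in $\mathbb H_{0}^{\dag}$, so that the "$+\epsilon$" there is read as "$+$ (some scalar disturbance with the same exogeneity)".
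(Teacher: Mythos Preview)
Your necessity argument and the first half of sufficiency (invoking \Cref{thm:equivalence bw separability and conditinoal Indep} from $W^*\indep Z\cond X$) match the paper. The gap is the second half of sufficiency: the assertion that ``the only $X$-dependence of the index $\nu(X,\epsilon)$ is a location term'' does \emph{not} follow from $W^*\indep X$ alone, and relabeling $\tilde\epsilon\equiv W^*=\overline\nu(X,\epsilon)$ as ``the disturbance'' does not establish $\mathbb H_0^\dag$ as stated, because $\mathbb H_0^\dag$ is a structural restriction on $g(\cdot,\cdot,\epsilon)$ for the \emph{original} $\epsilon$. A concrete obstruction: take $\epsilon$ symmetric about $0$, $X\in\{-1,1\}$ independent of $\epsilon$, and $\nu(x,\epsilon)=x\epsilon$. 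Then $\overline\nu(X,\epsilon)=X\epsilon$ has the same (symmetric) law given $X=1$ and $X=-1$, so $W^*\indep X$, yet $g(d,x,\epsilon)=m(d,x)+x\epsilon$ is not of the form $m^\dag(d,x)+\epsilon$; the $X$-dependence of $\nu$ is scale, not location. Your last paragraph effectively proposes reinterpreting $\mathbb H_0^\dag$ to allow an arbitrary relabeled scalar error, which changes the claim rather than proving it.

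The paper closes exactly this gap by exploiting strict monotonicity of $\overline\nu(x,\cdot)$ in $\epsilon$ (the scalar single-index structure of \Cref{assmp:single index error}): since $W^*=\overline\nu(X,\epsilon)\indep X$ and $\epsilon\indep X$, one has $\Pr(\overline\nu(x,\epsilon)>t)=\Pr(\epsilon>\overline\nu^{-1}(x,t))$ independent of $x$, and with $F_\epsilon$ strictly increasing this forces $\overline\nu^{-1}(x,t)$---hence $\overline\nu(x,e)$---to be constant in $x$. That yields $g(d,x,\epsilon)=m^\dag(d,x)+\nu_1(\epsilon)$ with $\nu_1$ a strictly increasing function of the original $\epsilon$, which is $\mathbb H_0^\dag$ up to the innocuous monotone renormalization of the scalar error. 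This monotonicity step is the missing ingredient in your argument; note that in the counterexample above $\overline\nu(-1,\cdot)$ is decreasing, so the monotonicity hypothesis is precisely what rules it out.
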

\noindent It is worth noting that  \eqref{eqn:W* indep X,Z} is equivalent to (i) $W \indep Z \cond X$; and (ii) $W-\Exp(W|X) \indep X$ \citep[see e.g.][]{dawid1979conditional}.
Condition (i) is the same as that in \Cref{thm:equivalence bw separability and conditinoal Indep} and (ii) has also been derived in \cite{LW_2014_JoE} for testing $\mathbb H^\dag_0$ under the unconfoundedness assumption. 

\section{Consistent test for unobserved treatment effect heterogeneity}
\label{sec:condtional_indep}
In this section, we propose tests for unobserved treatment effect heterogeneity via testing the conditional independence restriction, i.e., $W \indep Z \cond X$. Because $Z$ is binary, the conditional independence becomes $F_{W|X,Z}(w|x,0)=F_{W|X,Z}(w|x,1)$ for all $x\in\mathscr S_X$. Difficulties arise due to the fact that $W= Y + (1-D) \times {\delta}(X)$ needs to be nonparametrically estimated from the data, in particular when $X$ includes continuous variables.

In the following discussion, we distinguish the cases whether covariates $X$ include continuous variables.  For expositional simplicity, throughout we assume $X \in \realine$ be a scalar--valued random variable.  It is straightforward to generalize our result to vector-valued covariates. 
For the discrete case, our test adopts the classic two--sample Kolmogorov--Smirnov test.  When $X$ is continuous, we propose a modified Kolmogorov--Smirnov test that also converges to a limiting distribution at the $\sqrt n$--rate.

\subsection{Discrete Covariates}
Let $\{(Y_i,D_i,X_i, Z_i): i=1,\cdots,n\}$ be an i.i.d.\ random sample of $(Y,D,X,Z)$, where $X$ is distributed on a finite support. By \Cref{thm:equivalence bw separability and conditinoal Indep}, we can test  unobserved treatment effect heterogeneity by testing the model restriction 
\[
F_{W}(\cdot|X,Z=0) = F_{W}(\cdot|X,Z=1).
\]

For each $x\in\mathscr S_X$, let
\[
\hat{\delta}(x) = \frac{\hat{\mu}(x,1)-\hat{\mu}(x,0)}{\hat{p}(x,1)-\hat{p}(x,0)},
\]
where, for $z=0,1$,
\[
\hat{\mu}(x,z) = \frac{\sumi Y_i \1(X_i=x, Z_i=z)}{\sumi \1(X_i=x, Z_i=z)} \text{ and } \hat{p}(x,z) = \frac{\sumi D_i \1(X_i=x, Z_i=z)}{\sumi \1(X_i=x, Z_i=z)}.
\] 
It is straightforward that $\hat \delta(x)$ converges to $\delta(x)$ at the $\sqrt n$--rate under additional regularity conditions. Let $\hat W_i=Y_i+(1-D_i)\hat \delta(X_i)$. By definition, $\hat W_i-W_i=(1-D_i)\big[\hat \delta(X_i)-\delta(X_i)\big]$ is the first--stage estimation error.

We are now ready to define our test statistic as
\[
\hat{\mathcal{T}}_n = \sup_{w \in \mathbb{R};\ x \in \mathscr{S}_X} \sqrt{n} \left\vert \hat{F}_{\hat{W}|XZ}(w|x,0) - \hat{ F}_{\hat{W}|XZ}(w|x,1) \right\vert
\]
where $\hat{F}_{\widehat{W}|XZ}(w|x,z)$ is the empirical CDF of $\hat{W}$ conditional on $(X,Z) = (x,z)$, i.e., 
\[
\hat{F}_{\hat{W}|XZ}(w|x,z) = \frac{\sumi \1(\hat{W}_i \le w) \1(X_i=x, Z_i=z)}{\sumi \1(X_i=x, Z_i=z)}.
\] 

Next, we establish the limiting distribution of the proposed test statistic. For notational simplicity,  let $\1_{XZ}(x,z) = \1(X=x,Z=z)$ and $f_{WD|XZ}(w,d|x,z) \equiv f_{W|DXZ}(w|d,x,z) \times \Pr(D=d|X=x,Z=z)$. 
Let
\[
\kappa(w,x)=- \frac{f_{WD|XZ}(w,0|x,1)-f_{WD|XZ}(w, 0|x,0)}{p(x,1)-p(x,0)}.
\] 
It is worth noting that $\kappa(w,x) \geq 0$ can be interpreted as the p.d.f.\ of the potential outcome $g(1,X,\epsilon)$ given the complier group under \Cref{assmp:relevant IV,assmp:monotone selection}. 
Moreover, let 
\begin{align}
\label{eqn:dpsi}
\psi_{wx} &= [\1(W \le w) - F_{W|X}(w|x)] \times \left[ \frac{\1_{XZ}(x,0)}{\Pr(X=x,Z=0)} -  \frac{\1_{XZ}(x,1)}{ \Pr(X=x,Z=1)}  \right ];\\ 
\label{eqn:dphi}
\phi_{wx} &= \kappa(w,x)\times [ W - \Exp(W|X)]\times \left[ \frac{\1_{XZ}(x,0)}{ \Pr(X=x,Z=0)} -  \frac{\1_{XZ}(x,1)}{\Pr(X=x,Z=1)}  \right ].
\end{align}
By definition, $\psi_{wx}$ and $\phi_{wx}$ and are random objects indexed by $(w,x)$. In particular, $\mathbb E (\psi_{wx}|X,Z)=\mathbb E (\phi_{wx}|X,Z)=0$ under $\mathbb H_0$.

\begin{assumption}
\label{ass5}
Let  $X$ be a discrete random variable. Moreover, the probability distribution of $Y$ given $(D,X,Z)$ admits a uniformly continuous density function $f_{Y|DXZ}$.
\end{assumption}
\begin{theorem}
\label{thm:test_noX}
Suppose \Cref{assmp:relevant IV,assmp:monotone selection,assmp:single index error,assmp:sufficiently large compliers,ass5} hold.   Then, under $\mathbb H_0$, 
\[
\hat{\mathcal T}_n  \overset{d}{\rightarrow} \ \sup_{w \in \mathbb{R};  \ x\in\mathscr S_X}\  |\mathcal{Z}(w,x)|
\] 
where $\mathcal{Z}(\cdot,x)$ is a mean--zero Gaussian process with covariance kernel: for $(w,x), (w',x') \in \mathbb{R} \times \mathscr{S}_X$,
\[
\text{Cov} \left[ \mathcal{Z}(w,x), \mathcal{Z}(w',x') \right] = \Exp \left[ (\psi_{wx} + \phi_{wx}) (\psi_{w'x'} + \phi_{w'x'})\right].
\]
Moreover, under $\mathbb H_1$, we have
\[
n^{-\frac{1}{2}}\hat{\mathcal T}_n\overset{p}{\rightarrow} \sup_{w \in \mathbb{R};\ x \in \mathscr{S}_X} \left\vert {F}_{{W}|XZ}(w|x,0) - { F}_{{W}|XZ}(w|x,1) \right\vert.
\]
\end{theorem}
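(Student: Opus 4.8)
The plan is to linearize $\hat{\mathcal T}_n$ around the infeasible statistic built from the true $W_i=Y_i+(1-D_i)\delta(X_i)$ and to show that the only first--order effect of using the generated regressor $\hat W_i$ is the extra influence term $\phi_{wx}$. For each $(w,x)$ and $z\in\{0,1\}$ I would split
\[
\sqrt n\,\hat F_{\hat W|XZ}(w|x,z)=\sqrt n\,\hat F_{W|XZ}(w|x,z)+\sqrt n\bigl[\hat F_{\hat W|XZ}(w|x,z)-\hat F_{W|XZ}(w|x,z)\bigr],
\]
difference across $z=0,1$, and handle the two pieces separately. Because $X$ is discrete, every cell frequency $n^{-1}\sumi\1_{XZ}(x,z)$ converges to $\Pr(X=x,Z=z)>0$ and each cell average has an exact i.i.d.\ linearization, so no nonparametric smoothing bias needs to be tracked.

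For the first piece, $\mathbb H_0$ gives $W\indep Z\cond X$ by \Cref{thm:equivalence bw separability and conditinoal Indep}, hence $F_{W|XZ}(\cdot|x,0)=F_{W|XZ}(\cdot|x,1)=F_{W|X}(\cdot|x)$, and the usual within--cell two--sample empirical--process expansion yields $\sqrt n[\hat F_{W|XZ}(w|x,0)-\hat F_{W|XZ}(w|x,1)]=n^{-1/2}\sumi\psi_{wx,i}+o_p(1)$ uniformly in $(w,x)$, with $\psi_{wx}$ as in \eqref{eqn:dpsi}. For the second piece, only $D_i=0$ observations matter since $\hat W_i=Y_i=W_i$ when $D_i=1$; I would write $\1(\hat W_i\le w)\1(D_i=0,X_i=x)=\1(Y_i\le w-\hat\delta(x))\1(D_i=0,X_i=x)$, add and subtract the corresponding conditional mean $\Pr(D=0|X=x,Z=z)\,F_{Y|D=0,X=x,Z=z}(w-\hat\delta(x))$, whose derivative in the argument at $\delta(x)$ equals $-f_{WD|XZ}(w,0|x,z)$, and dispose of the centered remainder by stochastic equicontinuity of the empirical process over the VC--subgraph class $\{(y,d)\mapsto[\1(y\le w-t)-\1(y\le w-\delta(x))]\1(d=0):|t-\delta(x)|\le\varepsilon_n\}$, whose $L^2$ diameter vanishes as $\varepsilon_n\to0$. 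Combined with $\hat\delta(x)-\delta(x)=O_p(n^{-1/2})$ and boundedness/uniform continuity of $f_{Y|DXZ}$ (\Cref{ass5}), this gives
\[
\hat F_{\hat W|XZ}(w|x,z)-\hat F_{W|XZ}(w|x,z)=-f_{WD|XZ}(w,0|x,z)\,[\hat\delta(x)-\delta(x)]+o_p(n^{-1/2})
\]
uniformly in $(w,x)$; differencing over $z$ turns the coefficient into $-\kappa(w,x)(p(x,1)-p(x,0))$. Finally I would substitute the i.i.d.\ expansion of $\hat\delta(x)$ — obtained by plugging the cell--mean linearizations of $\hat\mu(x,z)$ and $\hat p(x,z)$ into \eqref{eqn:late} and simplifying via $Y_i-D_i\delta(x)=W_i-\delta(x)$ together with $\Exp[W|X=x,Z=z]=\Exp[W|X=x]$ (again $\mathbb H_0$) — which collapses the second piece to $n^{-1/2}\sumi\phi_{wx,i}+o_p(1)$, with $\phi_{wx}$ exactly as in \eqref{eqn:dphi}.

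Assembling, $\sqrt n[\hat F_{\hat W|XZ}(w|x,0)-\hat F_{\hat W|XZ}(w|x,1)]=n^{-1/2}\sumi(\psi_{wx,i}+\phi_{wx,i})+o_p(1)$ uniformly in $(w,x)$. I would then verify that $\{\psi_{wx}+\phi_{wx}:(w,x)\in\realine\times\mathscr S_X\}$ is Donsker: it is a finite union over $x\in\mathscr S_X$; for each $x$, $\{\psi_{wx}:w\}$ is a bounded VC--type class built from the monotone indicators $\1(W\le w)$, while $\{\phi_{wx}:w\}=\{\kappa(w,x)R_x:w\}$ is essentially one--dimensional ($R_x$ a fixed square--integrable variable and $\kappa(\cdot,x)$ bounded), so both are Donsker and hence so is their sum. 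Each summand is mean zero under $\mathbb H_0$, as noted after \eqref{eqn:dphi}, so the functional CLT gives weak convergence in $\ell^\infty(\realine\times\mathscr S_X)$ to the mean--zero Gaussian process $\mathcal Z$ with covariance $\Exp[(\psi_{wx}+\phi_{wx})(\psi_{w'x'}+\phi_{w'x'})]$, and the continuous mapping theorem applied to $\sup_{w,x}|\cdot|$ delivers $\hat{\mathcal T}_n\convd\sup_{w,x}|\mathcal Z(w,x)|$.

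Under $\mathbb H_1$, $\delta(x)$ in \eqref{eqn:late} remains a well--defined population object (\Cref{assmp:relevant IV} keeps the denominator away from $0$) and $\hat\delta(x)\convp\delta(x)$; with $W_i:=Y_i+(1-D_i)\delta(X_i)$, a uniform Glivenko--Cantelli argument within each $(x,z)$ cell — using continuity of $F_{Y|D=0,X=x,Z=z}$ to absorb the $\hat\delta(x)-\delta(x)$ shift — gives $\sup_w|\hat F_{\hat W|XZ}(w|x,z)-F_{W|XZ}(w|x,z)|\convp0$, whence $n^{-1/2}\hat{\mathcal T}_n\convp\sup_{w,x}|F_{W|XZ}(w|x,0)-F_{W|XZ}(w|x,1)|$. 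I expect the main obstacle to be the uniform--in--$(w,x)$ control of the generated regressor in the second piece — extracting the linear term $-f_{WD|XZ}(w,0|x,z)[\hat\delta(x)-\delta(x)]$ and showing the remainder is $o_p(n^{-1/2})$ uniformly in $w$; this is where the VC/stochastic--equicontinuity argument for the shifted--indicator class, together with the $\sqrt n$--consistency of $\hat\delta$, does the real work, the rest being routine because $X$ is discrete.
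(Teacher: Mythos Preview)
Your proposal is correct and takes essentially the same approach as the paper: both control the generated regressor via stochastic equicontinuity of a shifted--indicator class, Taylor--expand to extract the linear term $-f_{WD|XZ}(w,0|x,z)\,[\hat\delta(x)-\delta(x)]$, and then substitute the cell--mean influence--function expansion of $\hat\delta(x)-\delta(x)$ (using $Y-D\delta(x)=W-\delta(x)$ and $\Exp[W|X,Z]=\Exp[W|X]$ under $\mathbb H_0$) to arrive at the $\psi_{wx}+\phi_{wx}$ representation. The only difference is organizational --- you split into ``infeasible statistic plus correction'' whereas the paper adds and subtracts $\Exp[\1_{WXZ}(\cdot,x,z;\delta)]$ at the numerator level before dividing by the cell frequency --- and your Donsker verification and $\mathbb H_1$ argument are in fact more explicit than the paper's own proof.
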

\noindent
\Cref{thm:test_noX}  forms the basis for the following one-sided test against any alternative to $\mathbb H_0$: reject $\mathbb H_0$ significance level $\alpha$ if and only if $\hat{\mathcal T}_n\geq c_{\alpha}$. Regarding the limiting distribution $\mathcal Z(\cdot,x)$, $\phi_{wx}$ in the covariance kernel appears due to the first--stage estimation of $\delta(x)$. 

Because the asymptotic distribution of $\hat{\mathcal T}_n $ under $\mathbb H_0$ is complicated and it it is computationally difficult to derive  the limiting distribution for the critical value, then we apply the multiplier bootstrap method in  \cite{van1996weak}, \cite{donald2003ecma}, and \cite{hsu2016multiplier} to approximate the entire process and then to approximate critical values.  Specifically, we simulate a sequence of i.i.d.\ pseudo random variables  $\{U_i: i=1,\cdots,n\}$ with $\Exp(U)=0$, $\Exp(U^2)=1$, and $\Exp(|U^4|) < +\infty$. Moreover,  the simulated sample $\{U_i: i=1,\cdots,n\}$ is independent of the random sample $\{(Y_i,X_i,D_i,Z_i): i=1,\cdots,n\}$. Then, we obtain the following simulated empirical process:
\[
\hat{\mathcal{Z}}^u(w,x) = \frac{1}{\sqrt{n}} \sum_{i=1}^n U_{i} \times ( \hat{\psi}_{wx,i} + \hat{\phi}_{wx,i}),
\]
where $\hat{\psi}_{wx} + \hat{\phi}_{wx}$ is the estimated influence function such that
\begin{align*}
\hat{\psi}_{wx} &=  [\Indi(\hat{W} \le w) - \hat{F}_{\hat{W}|X}(w|x)] \times   \left[ \frac{\mathbb 1_{XZ}(x,0)}{\hat{\Pr}(X=x,Z=0)} -  \frac{\mathbb 1_{XZ}(x,1)}{ \hat{\Pr}(X=x,Z=1)}  \right ];\\ 
\hat{\phi}_{wx} &= \hat{\kappa}(w,x) \times   \left[ \hat{W} -  \frac{\sumi \hat{W}_i \Indi(X_i=x)}{\sumi \Indi(X_i=x)}\right] \times \left[ \frac{\mathbb 1_{XZ}(x,0)}{ \hat{\Pr}(X=x,Z=0)} -  \frac{\mathbb 1_{XZ}(x,1)}{\hat{\Pr}(X=x,Z=1)}  \right ]
\end{align*}
where 
\begin{align*}
&\hat{\Pr}(X=x,Z = z) = \frac{1}{n} \sumi \Indi(X_i=x,Z_i=z),\ z=0,1,\ \text{ and} \\
&\hat{\kappa}(w,x) = -\frac{\hat{f}_{{W}D|XZ}(w,0|x,1)-\hat{f}_{{W}D|XZ}(w, 0|x,0)}{\hat{p}(x,1) - \hat{p}(x,0)},
\end{align*}
in which $\hat{f}_{{W}D|XZ}(w,0|x,z)$ and $\hat p(x,z)$ are nonparametric estimators of ${f}_{{W}D|XZ}(w,0|x,z)$ and $ p(x,z)$, respectively. 
See e.g. \cite{hsu2016multiplier} for more details.  
By a similar argument to \cite{donald2003ecma} and \cite{hsu2016multiplier},  $\hat{\mathcal{Z}}^u(\cdot,x) $ converges to the same limiting process $\mathcal{Z}(\cdot,x)$. Next, to derive the critical values, we first let $\Pr_U$ be the multiplier probability measure.
Then, for a given significant level $\alpha$, the simulated critical value $\hat{c}_{n}(\alpha)$ is defined as
\[
\hat{c}_{n}(\alpha) = \sup \left\{ q: \Pr_U \left( \sup_{w \in \realine,\ x \in \mathscr{S}_{X}} \left| \hat{\mathcal{Z}}^u(w,x)\right| \le q \right) \le 1-\alpha  \right\}.
\]
By definition, $\hat{c}_{n}(\alpha)$ is the $(1-\alpha)$ quantile of the simulated  distribution. 
With the simulated critical value, we reject $\mathbb H_0$ if and only if  $\hat{\mathcal{T}}_n > \hat{c}_n(\alpha)$.


\subsection{Continuous Covariates}
When $X$ contains continuous covariates, the generated regressor $\hat{W}$ involves estimating a nonparametric function $\delta(\cdot)$. Because $\hat W$ appears in the indicator function of Kolmogorov--Smirnov--type test statistics, then the empirical process argument in the proof of \Cref{thm:test_noX} does not apply to the continuous covariates case. Therefore, we propose a modified Kolmogorov--Smirnov test that we can derive its limiting distribution.

Let $\lambda(t)= -t\times \mathbb 1 (t\leq 0)$ and $\Pi(w|x,z)= \mathbb E [\lambda(W-w)|X=x, Z=z]$. As a matter of fact,  $\Pi(\cdot |x,z) $  is the primitive function of the $F_{W|XZ}(\cdot|x,z)$, i.e., $\frac{\partial }{\partial w} \Pi(w|x,z) = F_{W|XZ}(w|x,z)$. Then $\Pi(\cdot|X,Z)$ characterizes the (conditional) distribution of $W$ given $X$ and $Z$ the same as the c.d.f. $F_{W|XZ}$. 
Hence, $W \indep Z \cond X$ is equivalent to the following: 
\[
\mathbb{H}^{\pi}_{0}: \Pi(\cdot|x,0) =\Pi(\cdot|x,1), \ \ \forall \ x\in\mathscr S_X
\] 
Note that $\lambda(\cdot)$ is a differentiable function.
W.l.o.g., we assume $\mathscr{S}_W$ is bounded.\footnote{ When $\mathscr S_W$ is unbounded, we need to modify $\Pi(w|x,z)$ by $\tilde \Pi(w|x,z)= 2(w^2+C)]^{-1/2} \times \Pi(w|x,z)$ where $C\geq \mathbb E (W^2)$. The modification ensures $\tilde \Pi(\cdot|x,z)$ is uniformly bounded above. Then all our arguments remain valid.} 

Let  $G(w,x; z)=\mathbb E \left[\mathbb 1^*_{XZ}(x,z) q(X,z') \lambda (W-w)\right]$, where $z'=1-z$, $q(x,z)=f_{X|Z}(x|z)\mathbb P(Z=z)$ and  $\mathbb 1^*_{XZ}(x,z) = \mathbb 1 (X\leq x;Z=z)$. 
Following \cite{stinchcombe1998consistent}, we rewrite the conditional moments  by the following unconditional restrictions: 
\[
\mathbb{H}^G_0:  \ G(w,x;0)-G(w,x;1)=0, \ \ \forall (w,x)\in\mathbb R\times \mathscr S_X.
\]
To see the equivalence between $\mathbb{H}^{\pi}_{0}$ and $\mathbb{H}^{G}_{0}$, note that 
\[
\frac{\partial}{\partial x} \Exp[ \Indi(X \le x)   \lambda(W-w)  f_{X|Z}(X|1-z)| Z=z] = \Pi(w|x,z)  q(x,0)q(x,1).
\]
Thus, our test statistic is constructed based on $\mathbb{H}^G_0$.

Let $K$ and $h$ be a bounded kernel function and a smoothing bandwidth, respectively.  
By eq. \eqref{eqn:late}, we nonparametrically estimate $\delta(X_i)$ by 
\[
\hat{\delta}(X_i)=\frac{ \hat{\mu}(X_i,1)-\hat{\mu}(X_i,0)}{\hat{p}(X_i,1)-\hat{p}(X_i,0)}.
\] 
where, for $z=0,1$, 
\[
\hat{\mu}(X_i,z) = \frac{ \sum_{j\neq i}Y_j K \big( \frac{X_j-X_i}{h} \big)  \Indi (Z_j=z)}{\sum_{j\neq i}K \big( \frac{X_j-X_i}{h} \big)  \Indi (Z_j=z)}  \text{ and } \hat{p}(X_i,z) = \frac{ \sum_{j\neq i} D_j  K\left( \frac{X_j-X_i}{h} \right)  \Indi (Z_j=z)}{\sum_{j\neq i} K \big( \frac{X_j-X_i}{h} \big) \Indi (Z_j=z)}.
\]

Let $\hat{q}(X_i,z) =\frac{1}{(n-1) h} \sum_{j\neq i} K \big( \frac{X_j-X_i}{h} \big) \Indi(Z_j = z)$ be the estimator of $q(X_i,z)$.  Moreover, define 
\[
\hat{\mathcal{T}}^{c}_{n} = \sup_{(w,x) \in \mathscr{S}_{WX}}\ \sqrt{n} \left| \hat{G}(w, x; 0) - \hat{G}(w, x; 1) \right|
\]
where 
\[
\hat{G}(w,x; z) = \frac{1}{n}\sum_{i=1}^n \Indi^{*}_{X_iZ_i}(x,z) \times \hat{q}(X_i,1-z) \times\lambda (\hat{W}_i-w).
\] 
In above definition, the support $\mathscr S_{WX}$ is assumed to be known for simplicity. 
In practice, this assumption can be relaxed by using a consistent set estimator $\hat{\mathcal{S}}_{WX}$ of $\mathscr{S}_{WX}$.

As is shown below, the proposed test statistics $\hat{\mathcal T}^c_n$ converges in distribution to a limit at the regular $\sqrt n$ rate. 
The proofs proceed in two steps: we first show that $\hat{G}(w,x;z)$ can be approximate by 
\[
\tilde G(w,x;z)=\frac{1}{n}\sum_{i=1}^n \mathbb 1^*_{X_iZ_i}(x,z)\times  \hat q(X_i,1-z)\times(w - \hat{W}_i) \times \mathbb{1}(W_i \le w).
\]  
It is worth noting that $\tilde{G}(w,x;z)$ has no nonparametric estimator in the indicator function, which allows us to apply $\mathbb{U}$-processes theorem to establish its limiting distribution in the second step.

The first step of our proof is to show that for $z=0,1$
\[
\sup_{(w,x)\in\mathscr S_{WX}}\ \left| \hat{G} (w,x;z)- \tilde{G}(w,x;z)\right| = o_p \left( n^{-1/2} \right).
\] 
A key condition for above result is that the nonparametric estimator $\hat{\delta}(\cdot)$ converges to $\delta(\cdot)$ uniformly at a rate faster than $n^{-\iota}$ for some $\iota>\frac{1}{4}$.
To show the approximation, we assume the following regularity assumptions.
\begin{assumption} 
\label{assmp:densitysmoothness}
 $\mathscr S_W\subseteq \mathbb R$ is a compact subset. Let  $ \sup_{(x,z)\in\mathscr S_{XZ}} f_{X|Z}(x|z)\leq \overline f$ for some $\overline f<+\infty$. Moreover, $\inf_{x\in\mathscr S_X} |q(x,1)-q(x,0)|>0$.
\end{assumption}

\begin{assumption} 
\label{assmp:bandwidth}
For some $\iota>\frac{1}{4}$, $h\rightarrow 0$ and $ n^{\iota}\frac{1}{\sqrt{n h}}\rightarrow 0$  as $n\rightarrow \infty$.
\end{assumption}

\begin{assumption}
\label{assmp:kernel}
The first stage nonparametric estimators satisfy:
\begin{align*}
&\sup_{(x,z)\in\mathscr S_{XZ}}\Big|\mathbb E \Big[\frac{1}{nh}\sum_{j=1}^n K \big(\frac{X_j-x}{h}\big) \mathbb 1(Z_j=z)\Big]-q(x,z)\Big|= O_p(n^{-\iota}),\\
&\sup_{(x,z)\in\mathscr S_{XZ}}\Big|\mathbb E \Big[\frac{1}{nh}\sum_{j=1}^n D_j K \big(\frac{X_j-x}{h}\big) \mathbb 1(Z_j=z)\Big]-p(x,z) q(x,z)\Big|= O_p(n^{-\iota}),\\
&\sup_{(x,z)\in\mathscr S_{XZ}}\Big|\mathbb E\Big[\frac{1}{nh}\sum_{j=1}^n Y_j K \big(\frac{X_j-x}{h}\big) \mathbb 1(Z_j=z)\Big]-\mathbb E (Y|X=x,Z=z) q(x,z)\Big|= O_p(n^{-\iota}),
\end{align*} 
\end{assumption}
\noindent
\Cref{assmp:densitysmoothness} is weak and standard in the literature. 
\Cref{assmp:bandwidth,assmp:kernel} require the standard deviation and bias term of nonparametric estimation converge to zero uniformly at a rate no slower than $n^{-\iota}$, respectively.  
In particular, \Cref{assmp:kernel} is a high-level assumption that can be derived by primitive conditions on $K$ and $h$.

\begin{lemma}
\label{lem:approximation}
Suppose \Cref{assmp:densitysmoothness,assmp:bandwidth,assmp:kernel} hold. 
Then, for $z =0,1$, we have
\[
\sup_{(w,x) \in \mathscr{S}_{WX}}\ \left| \hat{G} (w,x; z) - \tilde{G} (w,x; z) \right| = o_p(n^{-1/2}).
\] 
\end{lemma}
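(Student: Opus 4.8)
The plan is to decompose the difference $\hat G(w,x;z) - \tilde G(w,x;z)$ into two pieces: one coming from replacing the kink function $\lambda(\hat W_i - w) = -(\hat W_i - w)\1(\hat W_i \le w)$ by the ``linearized'' term $(w - \hat W_i)\1(W_i \le w)$ — i.e. swapping the estimated threshold event $\{\hat W_i \le w\}$ for the true event $\{W_i \le w\}$ inside the indicator while keeping $\hat W_i$ in the smooth factor — and a remainder that is already handled because $\hat q$ appears identically in both expressions. Writing $\hat W_i - W_i = (1-D_i)[\hat\delta(X_i) - \delta(X_i)] =: (1-D_i)\Delta_i$, the first step is to record the uniform bound $\sup_i |\Delta_i| = O_p(n^{-\iota})$ with $\iota > 1/4$, which follows from \Cref{assmp:densitysmoothness,assmp:bandwidth,assmp:kernel}: the denominator $\hat p(X_i,1)-\hat p(X_i,0)$ is bounded away from zero uniformly by \Cref{assmp:relevant IV} together with the bias/variance rates, and the numerator converges uniformly at rate $n^{-\iota}$.

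Next I would bound the main discrepancy pointwise in $(w,x)$. The key algebraic identity is that $\lambda(\hat W_i - w) - (w - \hat W_i)\1(W_i \le w)$ is nonzero only on the ``boundary'' set where $\1(\hat W_i \le w) \ne \1(W_i \le w)$, i.e. where $W_i$ lies within $|\Delta_i|$ of $w$; and on that set the kink function is itself of order $|\hat W_i - w| \le |\Delta_i|$ because $\lambda$ vanishes (with zero derivative) at the origin. Hence the $i$-th summand is $O(|\Delta_i| \cdot \1(|W_i - w| \le |\Delta_i|))$ up to the bounded factor $\hat q(X_i,1-z) \le \overline f + o_p(1)$ and the indicator $\1^*_{X_iZ_i}(x,z)$. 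Summing over $i$, dividing by $n$, and taking the supremum over $(w,x) \in \mathscr S_{WX}$, the bound becomes $O_p(\sup_i|\Delta_i|) \times \frac1n\sum_i \1(|W_i - w^*| \le \sup_j|\Delta_j|)$; because $W$ given $(D,X,Z)$ has a uniformly bounded density (this is where a smoothness hypothesis on the conditional law of $W$, equivalently of $Y$, enters — inherited from the structure, cf.\ \Cref{ass5} in the discrete case and the compactness in \Cref{assmp:densitysmoothness}), the empirical average of that shrinking-window indicator is $O_p(\sup_j|\Delta_j|) = O_p(n^{-\iota})$ uniformly in $w$ by a standard bracketing/Glivenko--Cantelli argument for the class $\{w \mapsto \1(w - \rho \le W \le w)\}$. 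Multiplying the two factors gives $O_p(n^{-2\iota}) = o_p(n^{-1/2})$ since $2\iota > 1/2$.

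The only remaining gap is that after the boundary-set argument we are left comparing $\frac1n\sum_i \1^*_{X_iZ_i}(x,z)\hat q(X_i,1-z)(w-\hat W_i)\1(W_i\le w)$ with the same expression — so that piece is exact — but one must also verify that the contribution of $\1(\hat W_i \le w)$ versus $\1(W_i\le w)$ in the \emph{linear} (non-vanishing) part of $\lambda$ is controlled; this is subsumed in the boundary-set estimate above because outside the boundary set the two indicators agree and the linear parts cancel, while on the boundary set $|w - \hat W_i|\le|\Delta_i|$ again. I expect the main obstacle to be making the ``shrinking-window empirical average'' bound fully uniform in $(w,x)$ simultaneously — i.e. controlling $\sup_{w}\frac1n\sum_i\1(|W_i-w|\le\rho_n)$ by $O_p(\rho_n)$ when $\rho_n = \sup_j|\Delta_j|$ is itself random — which requires either conditioning on the first-stage sample and invoking a uniform law with a deterministic envelope $\bar\rho_n$ dominating $\rho_n$ with probability approaching one, or a direct maximal-inequality argument over the VC class of half-lines; the rest is bookkeeping with the rates $\iota>1/4$.
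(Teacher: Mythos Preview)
Your proposal is correct and shares the paper's core observation: the difference $\hat G-\tilde G$ collapses to $\frac1n\sum_i \1^*_{X_iZ_i}(x,z)\,\hat q(X_i,1-z)\,(w-\hat W_i)\,[\1(\hat W_i\le w)-\1(W_i\le w)]$, which is supported on the boundary set where the two indicators disagree, and on that set both the linear factor $|w-\hat W_i|$ and the window width are controlled by $|\hat W_i-W_i|$. The technical organization, however, differs. The paper splits at a \emph{deterministic} threshold $n^{-r}$ with $r\in(\tfrac14,\iota)$: the near piece $\mathbb T_1$ (where $|W_i-w|\le n^{-r}$) is bounded in expectation via $|w-W_i|\le n^{-r}$ and $\Pr(|W-w|\le n^{-r})=O(n^{-r})$, yielding $O(n^{-2r})$; the far piece $\mathbb T_2$ (where $|W_i-w|>n^{-r}$ but the indicators still differ, forcing $|\hat\delta(X_i)-\delta(X_i)|>n^{-r}$) is killed by a Bernstein exponential tail bound, stated separately as a technical lemma, giving $\sup_x\Pr[|\hat\delta(x)-\delta(x)|>n^{-r}]=o(n^{-k})$ for every $k$, which even absorbs the crude deterministic bound $\hat q\le \overline K/h$. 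Your route instead keeps the \emph{random} threshold $\sup_i|\Delta_i|=O_p(n^{-\iota})$ throughout, bounds $\hat q=O_p(1)$ uniformly, and appeals to a Glivenko--Cantelli/bracketing argument for the shrinking-window empirical measure. This is more elementary in that it avoids the exponential inequality altogether, at the price of the random-threshold technicality you correctly flag at the end; the paper's deterministic split sidesteps that issue but must invoke Bernstein to dispose of $\mathbb T_2$. Both arguments tacitly rely on a bounded density for $W$ to get $\Pr(|W-w|\le\rho)=O(\rho)$, a point you make explicit and the paper leaves implicit.
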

\noindent
By \Cref{lem:approximation}, it suffices to establish the  limiting distribution of $\tilde G(w, x; 1)-\tilde G(w, x; 0)$ for the asymptotic properties of our test statistics.

Note that 
\begin{multline*}
\tilde G(w,x,z) = \frac{1}{n} \sumi  [\Indi^{*}_{X_iZ_i}(x,z) \hat{q}(X_i,z') (W_i-\hat{W}_i)  \Indi( W_i \le w)] \\
+\frac{1}{n} \sumi [\Indi^{*}_{X_iZ_i}(x,z) \hat{q}(X_i,z')  (w-W_i)  \Indi (W_i\leq w)] 
\equiv \mathbb U_{1}(w,x;z)+\mathbb U_{2}(w,x;z).
\end{multline*}where $z'=1-z$.
Moreover, we have
\[
\mathbb U_1(w,x;z) = \frac{1}{n} \sumi \{ \Indi^{*}_{X_iZ_i}(x,z)  q(X_i,z') (1-D_i) \Indi (W_i\leq w) [{\delta}(X_i)- \hat{\delta}(X_i)] \} + o_p(n^{-\frac{1}{2}})
\]
provided that $\sup_{x\in\mathscr S_X}\left|\left[\hat q(x,z)- q(x,z)\right]\times \left[\hat \delta(x) -\delta(x)\right]\right| = o_p(n^{-\frac{1}{2}})$ holds.
Therefore, the leading terms in $\mathbb {U}_1(w,x;z)$ and $\mathbb{U}_2(w,x;z)$  are essentially two $\mathcal{U}$--processes indexed by $w$ and $x$. 
Following \citet[Theorem 5]{nolan1988functional} and \citet[Theorem 3.1]{powell1989semiparametric}, $\sqrt n[\tilde G(\cdot, \cdot; z)-G(\cdot, \cdot; z)]$ converges in distribution to a mean--zero Gaussian process with bounded and nonzero covariance kernel.

We denote $F_{WD|XZ}(w,d|X,z) \equiv  F_{W|DXZ}(w|d,X,z) \times \Pr (D=d|X,Z=z)$. Let 
\begin{align*}
&\kappa^c(w,x)= -\frac{F_{WD|XZ}(w,0|x,1)-F_{WD|XZ}(w, 0|x,0)}{p(X,1)-p(x,0)};\\
&\psi^c_{wx} =  \Big\{\lambda(w-W) - \mathbb{E}[\lambda(w-W)|X]\Big\} \left[ \frac{\mathbb{1}^{*}_{XZ}(x,0)}{q(X,0)} - \frac{ \mathbb{1}^{*}_{XZ}(x,1)} { q(X,1)}  \right ]  q(X,0) q(X,1);\\
&\phi^c_{wx} =\kappa^c(w, X) \times [W-\mathbb E(W|X)] \times \left[ \frac{\mathbb{1}^{*}_{XZ}(x,0)}{q(X,0)} - \frac{ \mathbb{1}^{*}_{XZ}(x,1)} { q(X,1)}  \right ]q(X,0) q(X,1).
\end{align*} 
By definition, $\psi^c_{wx}$ and $\phi^c_{wx}$ and are random processes indexed by $(w,x)$. Moreover, we have $\mathbb E (\psi^c_{wx}|X,Z)=\mathbb E (\phi^c_{wx}|X,Z)=0$ under $\mathbb H^e_0$.

To establish the weak convergence, we make the following assumptions.
\begin{assumption}
\label{assmp:continuity}
$f_{X|Z}(x|z)$, $\delta(x)$, $p(x,z)$ and $\mathbb E (Y|X=x,Z=z)$ are continuously differentiable in $x$ for $z=0,1$. 
\end{assumption}

\begin{assumption}
\label{assmp:bandkernel}
Let $nh^3_q\rightarrow \infty$ and $nh^3\rightarrow \infty$ as $n\rightarrow \infty$. 
Moreover, The support of $K$ (resp. $K_q$) is a convex (possibly unbounded) subset of $\mathbb R$ with nonempty interior, with the origin as an interior point. 
$K(\cdot)$ (resp. $K_q(\cdot)$) is a bounded differentiable function such that $\int K(u)=1$, $\int uK(u)=0$, and $K(u)=K(-u)$ holds for all $u$ in the support. 
\end{assumption}

\begin{assumption}
\label{assmp:bias}
$\sup_{x\in\mathscr S_X}\left|\mathbb E [\hat \delta(x)]-\delta(x)\right|=o_p(n^{-\frac{1}{2}})$ and $\sup_{xz\in\mathscr S_{XZ}}\left|\mathbb E[ \hat q(x,z)]-q(x,z)\right|=o_p(n^{-\frac{1}{2}})$.
\end{assumption}
\noindent
\Cref{assmp:densitysmoothness} is a smoothness condition that can be further relaxed by the Lipschitz condition. \Cref{assmp:bandkernel} is standard in the kernel regression literature. In particular, the first part strengths  the conditions for bandwidth choice in \Cref{assmp:bandwidth}.
\Cref{assmp:bias} strengths \Cref{assmp:kernel} by requiring the bias term in the first--stage nonparametric estimation to be smaller than $o_p(n^{-1/2})$, which can be established under high order kernels \citep[see e.g.][]{powell1989semiparametric}.

The limiting distribution of our test statistic is summarized in the following theorem.
\begin{theorem}
\label{thm:test_contX}
Suppose the assumptions in \Cref{lem:approximation} and in addition \Cref{assmp:continuity,assmp:bandkernel,assmp:bias} hold. 
Then, under $\mathbb{H}_0^e$, 
\begin{align*}
& \hat{\mathcal{T}}^c_n\overset{d}{\rightarrow }\sup_{w\in\mathbb R; \ x\in\mathscr S_X} |\mathcal Z^c(w,x)|
\end{align*} 
where   $\mathcal Z^c(w,x)$  is a mean--zero Gaussian process with the following covariance kernel
\[
\text{Cov} \left[ \mathcal{Z}^c(w,x), \mathcal{Z}^c(w',x') \right] = \Exp \left[ (\psi^{c}_{wx} + \phi^{c}_{wx}) (\psi^{c}_{w'x'} + \phi^{c}_{w'x'})\right], \ \forall w,w'\in\mathbb R.
\]
Moreover, under $\mathbb H_1^e$, we have  
\[
n^{-\frac{1}{2}}\hat{\mathcal{T}}^c_n  \overset{p}{\rightarrow} \ \sup_{w \in \mathbb{R};  \ x\in\mathscr S_X}\  |G(w,x;0)-G(w,x;1)|.
\]
\end{theorem}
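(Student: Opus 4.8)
The plan is to linearize $\hat{\mathcal T}^c_n$ down to the supremum of a smooth empirical process and then apply a functional central limit theorem. By \Cref{lem:approximation}, $\hat{\mathcal T}^c_n=\sup_{(w,x)\in\mathscr S_{WX}}\sqrt n\,|\tilde G(w,x;0)-\tilde G(w,x;1)|+o_p(1)$, so it suffices to analyze $\tilde G$. Using the decomposition $\tilde G(w,x;z)=\mathbb U_1(w,x;z)+\mathbb U_2(w,x;z)$ and the reduction $\mathbb U_1(w,x;z)=\frac{1}{n}\sum_i\mathbb 1^{*}_{X_iZ_i}(x,z)\,q(X_i,1-z)(1-D_i)\Indi(W_i\le w)[\delta(X_i)-\hat\delta(X_i)]+o_p(n^{-1/2})$ recorded before the statement (legitimate because $\sup_x|[\hat q(x,z)-q(x,z)][\hat\delta(x)-\delta(x)]|=o_p(n^{-1/2})$ under \Cref{assmp:bandwidth,assmp:bandkernel}), the target is the asymptotically linear representation
\[
\sqrt n\,\big[\tilde G(w,x;0)-\tilde G(w,x;1)\big]
=\frac{1}{\sqrt n}\sum_{i=1}^n\big(\psi^c_{wx,i}+\phi^c_{wx,i}\big)+o_p(1),
\]
uniform over $(w,x)\in\mathscr S_{WX}$ under $\mathbb H^e_0$.

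First I would handle $\mathbb U_2$. Substituting the kernel form of $\hat q(X_i,1-z)$ turns $\mathbb U_2(w,x;z)$ into a non-degenerate second-order $\mathcal U$-statistic with an $h_q$-dependent kernel, indexed by $(w,x)$. The Hoeffding decomposition together with the $\mathcal U$-process maximal inequality of \citet[Theorem 5]{nolan1988functional} shows its degenerate part is $o_p(n^{-1/2})$ uniformly, while \Cref{assmp:bias} gives $\mathbb E[\mathbb U_2(w,x;z)]=G(w,x;z)+o_p(n^{-1/2})$. Letting $h_q\to 0$, the H\'ajek projection onto observation $i$ returns $\frac{1}{n}\sum_i\mathbb 1^{*}_{X_iZ_i}(x,z)\,q(X_i,1-z)\,\lambda(W_i-w)$ and the projection onto the kernel argument returns $\frac{1}{n}\sum_i\Indi(X_i\le x,Z_i=1-z)\,\Pi(w|X_i,z)\,q(X_i,z)$; taking the difference over $z\in\{0,1\}$ and invoking $\Pi(\cdot|x,0)=\Pi(\cdot|x,1)$ under $\mathbb H^e_0$ collapses the two pieces precisely to $\frac{1}{\sqrt n}\sum_i\psi^c_{wx,i}$.

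Next I would handle $\mathbb U_1$. Linearizing the Nadaraya--Watson ratios in $\hat\delta(X_i)-\delta(X_i)$ and simplifying --- using $W=Y+(1-D)\delta(X)$ and the identity $\mathbb E(Y|X,Z=z)=\mathbb E(W|X)-\delta(X)\big(1-p(X,z)\big)$ valid under $\mathbb H_0$, which rewrites the first-stage residuals as $W_j-\mathbb E(W|X_j)$ --- makes $\mathbb U_1(w,x;z)$, up to $o_p(n^{-1/2})$, a second-order kernel-weighted $\mathcal U$-statistic of the form treated in \citet[Theorem 3.1]{powell1989semiparametric}. Its degenerate part is again $o_p(n^{-1/2})$ uniformly by \citet{nolan1988functional}; the projection onto observation $i$ vanishes since the first-stage residual is conditionally mean zero given $X_i$; and the projection onto the localizing argument, after combining the $z=0,1$ contributions against the LATE denominator $p(x,1)-p(x,0)$, multiplies $W_j-\mathbb E(W|X_j)$ by $\kappa^c(w,x)=-\big[F_{WD|XZ}(w,0|x,1)-F_{WD|XZ}(w,0|x,0)\big]/\big[p(x,1)-p(x,0)\big]$ --- that is, it equals $\frac{1}{\sqrt n}\sum_i\phi^c_{wx,i}$. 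Adding the two pieces yields the displayed representation, and since $\mathbb E(\psi^c_{wx}+\phi^c_{wx}|X,Z)=0$ under $\mathbb H^e_0$ no recentering is needed.

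With the representation in hand, the process $(w,x)\mapsto n^{-1/2}\sum_i(\psi^c_{wx,i}+\phi^c_{wx,i})$ converges weakly in $\ell^\infty(\mathscr S_{WX})$: finite-dimensional convergence is the ordinary CLT, and asymptotic tightness holds because the index class is generated by the VC classes $\{\Indi(W\le w)\}_w$ and $\{\Indi(X\le x,Z=z)\}_x$, the Lipschitz function $\lambda$, and the uniformly bounded continuous functions $\Pi$, $\kappa^c$, $q$ (\Cref{assmp:densitysmoothness,assmp:continuity}), so the class is Donsker; the limit is a tight mean-zero Gaussian process $\mathcal Z^c$ with covariance kernel $\Exp[(\psi^c_{wx}+\phi^c_{wx})(\psi^c_{w'x'}+\phi^c_{w'x'})]$, and the continuous mapping theorem on the sup-norm gives $\hat{\mathcal T}^c_n\overset{d}{\to}\sup_{(w,x)}|\mathcal Z^c(w,x)|$. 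Under $\mathbb H^e_1$, \Cref{lem:approximation} gives $\hat G(w,x;z)=\tilde G(w,x;z)+o_p(1)$, a uniform law of large numbers for $\mathcal U$-processes gives $\tilde G(w,x;z)\overset{p}{\to}G(w,x;z)$ uniformly, and continuity of the sup delivers $n^{-1/2}\hat{\mathcal T}^c_n\overset{p}{\to}\sup_{(w,x)}|G(w,x;0)-G(w,x;1)|$. The main obstacle throughout is the uniform-in-$(w,x)$ bookkeeping of the two Hoeffding projections: one must verify that every degenerate second-order term, and the third-order cross term from the product of the $\hat q$ and $\hat\delta$ kernel expansions, is $o_p(n^{-1/2})$ --- which is exactly where the bandwidth conditions $nh^3\to\infty$, $n^{\iota}(nh)^{-1/2}\to 0$ and the maximal inequalities of \citet{nolan1988functional} and \citet{powell1989semiparametric} get consumed --- and that the estimated-$\delta$ correction telescopes into the clean factor $\kappa^c(w,x)\,[W-\mathbb E(W|X)]$.
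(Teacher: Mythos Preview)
Your proposal is correct and follows essentially the same route as the paper: reduce to $\tilde G$ via \Cref{lem:approximation}, decompose $\tilde G=\mathbb U_1+\mathbb U_2$, treat each as a second-order $\mathcal U$-process, apply the Hoeffding/H\'ajek projection together with the \citet{nolan1988functional} and \citet{powell1989semiparametric} results to obtain the asymptotically linear representation $\sqrt n[\tilde G(w,x;0)-\tilde G(w,x;1)]=\frac{1}{\sqrt n}\sum_i(\psi^c_{wx,i}+\phi^c_{wx,i})+o_p(1)$, and finish with a functional CLT and the continuous mapping theorem. Your write-up is in fact slightly more explicit than the paper's in two places---the Donsker-class justification for tightness and the $\mathbb H^e_1$ consistency argument---but the decomposition, the projection calculations (including the vanishing of the ``own-$i$'' projection in $\mathbb U_1$ and the cancellation of the $\Pi$-terms in $\mathbb U_2$ under the null), and the role of \Cref{assmp:bias} in killing the bias are all exactly as in the paper.
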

\noindent
It is worth pointing out that our test is one-sided against any alternative to $\mathbb H_0$: reject $\mathbb H_0$ significance level $\alpha$ if and only if $\hat{\mathcal T}^c_n\geq c_{\alpha}$. 

Similar to the discrete--covariates case, we apply the multiplier bootstrap method to approximate the entire process and therefore to approximate critical values. 
The estimates of the pointwise influence function can be estimated by 
\begin{align*}
&\hat{\psi}^c_{wx} = \Big\{\lambda(w-\hat{W}) - \hat{\Exp}[\lambda(w-\hat{W})|X]\Big\} \left[ \frac{\1^{*}_{XZ}(x,0)}{\hat{q}(X,0)} - \frac{ \mathbb{1}^{*}_{XZ}(x,1)} { \hat{q}(X,1)}  \right ] \hat{q}(X,0)\hat{q}(X,1), \\
&\hat \phi^c_{wx} = -\hat{\kappa}^c(w, X) \times \left[\hat W-\hat{\mathbb E} (W|X)\right] \times \left[ \frac{\mathbb{1}^{*}_{XZ}(x,0)}{\hat{q}(X,0)} - \frac{ \mathbb{1}^{*}_{XZ}(x,1)} { \hat{q}(X,1)}  \right ]\hat{q}(X,0)\hat{q}(X,1),
\end{align*} 
where 
\[
\hat{\Exp}[\lambda(w-\hat{W})|X] = \frac{\sum_{j=1}^n \lambda(w-\hat{W}_j)K( \frac{X_j - X}{h})}{\sum_{j=1}^n K( \frac{X_j - X}{h} )} \text{ and }\hat{\mathbb E} (W|X)=\frac{\sum_{j=1}^n \hat{W}_jK( \frac{X_j - X}{h})}{\sum_{j=1}^n K( \frac{X_j - X}{h})}
\]
are estimators of $\mathbb E [\lambda(w-\hat{W})|X]$ and $\mathbb E (W|X)$, respectively, and 
\[
\hat \kappa^c(w,x)=-\frac{\hat F_{WD|XZ}(w,0|x,1)-\hat F_{WD|XZ}(w, 0|x,0)}{\hat p(X,1)-\hat p(x,0)}
\] 
in which
\[
\hat p(X,z)=\frac{\sum_{j=1}^n
D_jK( \frac{X_j - X}{h})\mathbb 1(Z_j=z)}{\sum_{j=1}^n K( \frac{X_j - X}{h})\mathbb 1(Z_j=z)},\ z=0,1
\]
and
\[
\hat F_{WD|XZ}(w,0|x,z)=\frac{\sum_{j=1}^n \mathbb 1 (\hat W_j\leq w;D_j=0)K(\frac{X_j-X}{h})\mathbb 1 (Z_j=z)}{\sum_{j=1}^n K(\frac{X_j-X}{h})\mathbb 1 (Z_j=z)}.
\]

 Note that we can simulate the limiting process by the following result: 
\[
\frac{1}{\sqrt n}\sum_{i=1}^n U_i \times (\hat{\psi}^c_{wx,i} + \hat{\phi}^c_{wx,i})\Rightarrow \mathcal Z^c(w,x).
\]
The test can be constructed similar to the discrete case
and we omit the details for brevity.
%

\section{Monte Carlo Simulations}
\label{sec:simulations}
We investigate the finite sample performance of our tests using Monte Carlo methods.  
First, we examine both size and power by using some simple data generating processes (DGPs). 
Two DGPs are considered. 
\begin{align*}
&\text{DGP 1}: \ Y = D X  + X\epsilon; \ \  D = \Indi(Z - \eta > 0);\\
&\text{DGP 2}: \ Y = D X  + (1+\gamma D)  X \epsilon; \ \ D = \Indi(Z - \eta > 0),
\end{align*} 
where $X$ is uniformly distributed on $\{1,2,3,4,5\}$,  $\epsilon \in \realine$ is uniformly distributed on $(0,1)$, and  $\eta = \rho \epsilon + \sqrt{1-\rho^2}  u$, where  $u$ is uniformly distributed on $(0,1)$ and independent of $\epsilon$, and $\rho=0.5, 0.7$, or $0.9$, respectively. 
The value of $\rho$ represents  level of the endogeneity. 
The instrumental variable $Z \in \{0,1\}$, independent of $X$,  follows Bernoulli distribution with the $\Pr(Z = 1)=0.3$, $0.5$, or $0.7$, respectively. 
Moreover, $\gamma =0.1,0.3$ and $0.5$ represents the ``degree'' of nonseparability in DGP 2. By design, $\mathbb{H}_0$ holds in DGP 1, but not in DGP 2.

We choose the sample size $n=500$, $1000$, and $2000$, and the rejection rate is approximated by $1000$ repetitions. 
To simulate the stochastic processes in the limiting distribution for deriving the critical values, we follow the the method of multiplier bootstrap stated in \Cref{sec:condtional_indep} with 1000 bootstrap repetition.
Moreover, we use $100$ grids on the support of $(\min(\hat{W}),\max(\hat{W}))$ for the suprema  of simulated stochastic processes.  

\Cref{table:dsize} reports size performances under DGP 1 with different values of $\rho$, $\alpha$ and $n$. In all cases, our tests have reasonable size.  \Cref{table:dpower_gamma_1,table:dpower_gamma_3,table:dpower_gamma_5} report power performances under DGP 2 with different values of $\gamma$. In particular,  the rejection rates increase rapidly with the sample size, which verifies the consistency of our test. Moreover, the rejection rate increases with  $\gamma$: Less separable of the error term, more likely to be rejected. 

Next, we investigate the case with continuously distributed covariates $X$. DGP 3 and DGP 4 are the same as DGP 1 and DGP 2, respectively, except that $X$ is a continuous random variable uniformly distributed on $[0,1]$. We choose sample size $n=1000$, $2000$, and $4000$, and the rejection rate is based on $1000$ repetitions. 
For each repetition, the $p$-value is approximated by $500$ simulations. To compute  the suprema of the simulated stochastic processes, we use $100$ grids on the support of $(\min(\hat{W}),\max(\hat{W}))$ and $100$ grids on $[0,1]$ for $w$ and $x$ respectively.
We choose $\gamma= 0.5$ and  $0.7$.

In this continuous covariates case, the performance of the proposed test behaves similarly to  the discrete covariates case. For simplicity, we only present the results for $\Pr(Z=1)=0.5$ and $\rho=0.7$. The results for other settings exhibit similar patterns. \Cref{table:csize} reports empirical levels at various nominal levels.
The level of our test is fairly well behaved and it converges to the nominal level as the sample size increases.
\Cref{table:cpower} presents the empirical power of our test. Clearly, our test is consistent. 

\section{Empirical Applications}
\label{sec:empirical}
\subsection{The Effect of Job Training Program on Earnings}
In this section we apply our testing approach to the effect of job training program on earnings.  The \textit{National Job Training Partnership Act} (JTPA), commissioned by the Department of Labor, began funding training from 1983 to late 1990's to increase employment and earnings for participants.
The major component of JTPA aims to support training for the economically disadvantaged.

Our sample consists of 11,204 observations from the JTPA, a survey  dataset from over $20,000$ adults and out-of-school youths who applied for JTPA in $16$ local areas across the country between 1987 and 1989.\footnote{JTPA services are provided at 649 sites, which might not be randomly chosen. For a given site, the applicants were randomly selected for the JTPA dataset. }
Each participant were assigned randomly to either a program group or a control group (1 out of 3 on average).  Program group are eligible to participate JTPA services, including classroom training, on-the-job training or job search assistance, and other services, while members of control group were not eligible for JTPA services for 18 months.

Following the literature, we use the program eligibility as an instrumental variable for the endogenous individual's participation decision. The outcome variable is individual earnings, measured by the sum of earnings in the 30-month period following the offer. The effects of JTPA training programs on earnings has also been studied by  \cite{abadie2002instrumental} under a general framework allowing for unobserved heterogeneous treatment effects. \footnote{The data is publicly available at \href{http://upjohn.org/services/resources/employment-research-data-center/national-jtpa-study}{http://upjohn.org/services/resources/employment-research-data-center/national-jtpa-study}.}

The observed covariates include a set of dummies for races, for high-school graduates, and for marriage,  whether the applicant worked at least 12 weeks in the 12 months preceding random assignment, and also 5 age-group dummies (22-24, 25-29, 30-35, 36-44, and 45-54), among others.
See \Cref{table:jtpa} for descriptive statistics. For simplicity, we group all applicants into 3 age categories (22-29, 30-35, and 36 and above), and pool all non-White applicants as minority applicants.

To implement, we use the Gaussian kernel with Silverman bandwidth selection.
For the critical value, we use $10,000$ bootstrapped simulations and search for the suprema from $5,000$ grids.
The $p$-value of our test is $0.1204$.
Therefore, $\mathbb H_0$, i.e. there is no unobserved heterogenous treatment effects, cannot be rejected at a $10\%$ significance level. 
It is worth noting that our results are robust to the number of simulations and the number of grids.

\subsection{The Impact of Fertility on Family Income}
The second empirical illustration example is to explore the impact of children on parents' labor supply and income.
 The endogeneity issue arises due to the  fertility variable.
\cite{rosenzweig1980testing} suggest the usage of the twin births as an instrumental variable.
The ``twin-strategy'' IV has been widely used in the literature.
See eg. \cite{angrist1998children} and \cite{vere2011fertility}. The impact of fertility on family income has also been studied by \cite{frolich2013unconditional} under a general framework allowing for unobserved heterogeneous treatment effects.

Our sample  come from 1990 and 2000 censuses, consisting of 602,767 and 573,437 observations, respectively. Similar to \cite{frolich2013unconditional}, we use the 1\% and 5\% Public Use Microdata Sample (PUMS) from the 1990 and 2000 censuses.\footnote{The data is publicly available at \href{https://www.census.gov/main/www/pums.html}{https://www.census.gov/main/www/pums.html}.} Moreover, our sample is restricted to 21--35 years old married mothers with at least one child. 
The outcome variable of interest  is the family's annual labor income.\footnote{It includes wages, salary, armed forces pay, commissions, tips, piece-rate payments, cash bonuses earned before deductions were made for taxes, bonds, pensions, union dues, etc. See \cite{frolich2013unconditional} for more details.} The treatment variable  is the dummy for a mother has two or more children. The instrument variable is the dummy for the first birth is a twin.  The covariates includes mother's and father's age, race, educational level, and working status. \Cref{table:fertility} provides descriptive statistics. Some covariates, e.g., age, years in education, and working hours per week, are treated as continuous variables.

For the critical value,  we use $5000$ bootstrapped simulations and search for the suprema from $1000$ grids. 
The $p$-values of our tests are $0.0031$ and $0.0004$ for the 1990 and 2000 censused, respectively.
These results suggest that the null hypothesis, i.e., homogeneous treatment effects, should be rejected at all usual significance levels.

%
%

\newpage
\begin{appendices}
\section*{Appendix}
\renewcommand{\theequation}{\Alph{section}.\arabic{equation}}
\setcounter{equation}{0}

\section{Proofs of Lemmas and Theorems}
\label{sec:appendix}
\subsection{Proof of \Cref{prop:equivalence bw hetero and separability}}
\begin{proof}
For the ``only if'' part, under $\mathbb{H}_0$, we have
\[
g(1,x,\epsilon) - h(0,x,\epsilon) = m(1,x) - m(0,x) \equiv \delta(x), \ \ \forall x \in \mathscr{S}_X.
\]

For the ``if part'', \eqref{counterfactual} implies 
\[
g(d,x,\epsilon) =  d \times [ g(1,x,\epsilon)  - g(0,x,\epsilon) ] + g(0,x,\epsilon) =  d \times \delta(x) + g(0,x,\epsilon).
\]
Therefore, $\mathbb{H}_0$ holds in the sense $m(d,x) = d \times \delta(x)$ and $\nu(x,\epsilon) = g(0,x,\epsilon)$.
\end{proof}


\subsection{Proof of \Cref{thm:equivalence bw separability and conditinoal Indep}}
\begin{proof}
Given \Cref{prop:equivalence bw hetero and separability}, it suffices to show the if part. 
Suppose $W \indep Z \cond X$ holds. 
Recall that $W = Y + (1-D) \times \delta(X)$. 
It follows that
\begin{multline*}
\Pr(Y\le y, D=1 | X, Z = 1 ) + \mathbb{P}(Y + \delta(X) \le y, D=0 | X, Z = 1 ) \\
=\Pr(Y\le y, D=1 | X, Z = 0 ) + \mathbb{P}(Y + \delta(X) \le y, D=0 | X, Z = 0 ), \ \forall y \in \mathbb{R}.
\end{multline*} 
Equivalently, we have
\begin{multline} 
\label{eqn:a1}
\Pr(Y \le y, D=1 | X, Z = 1 ) - \mathbb{P}(Y \le t, D=1 | X, Z = 0 )\\
=\Pr(Y \le y-\delta(X), D=0 | X, Z = 1 ) - \mathbb{P}(Y \le y-\delta(X), D=0 | X, Z = 0 ).
\end{multline} 

Let $V \equiv \nu(X,\epsilon)$ and define 
\begin{align*}
\Delta_0(\tau,x) &\equiv \Pr(V \le \tau, D=0 | X = x, Z = 1 ) - \Pr(V \le \tau, D=0 | X = x, Z = 0 ); \\
\Delta_1(\tau,x) &\equiv \Pr(V \le \tau, D=1 | X = x, Z = 0 ) - \Pr(V \le \tau, D=1 | X = x, Z = 1 ).
\end{align*}
By \Cref{assmp:relevant IV,assmp:monotone selection}, we have
\[
\Delta_0(\tau,x) = \Pr(V \le \tau, \eta \in \mathcal{C}_x | X = x) = \Delta_1(\tau,x)
\] 
which is strictly monotone in $\tau \in \mathscr S_{V|X=x,\ \eta \in \mathcal{C}_x}$ and, by \Cref{assmp:single index error,assmp:sufficiently large compliers}, 
\[
\mathscr S_{V|X=x,\ \eta \in \mathcal{C}_x}= \mathscr S_{V|X=x}.
\]
Therefore, we have 
\begin{align*}
&\Pr(Y \le y, D=1 | X = x, Z = 0 ) - \mathbb{P}(Y \le y, D=1 | X = x, Z = 1 )\\
&\quad=\Delta_1(\tilde{g}^{-1}(1,x,y), x)=\Delta_0(\tilde{g}^{-1}(1,x,y), x)\\
&\quad=\Pr(Y \le \tilde{g}(0,x,\tilde{g}^{-1}(1,x,y)), D=0 | X = x, Z = 1 ) \\
&\qquad- \mathbb{P}(Y \le \tilde{g}(0,x,\tilde{g}^{-1}(1,x,y)), D=0 | X = x, Z = 0 ),
\end{align*} 
where $\tilde{g}^{-1}(1,x,\cdot )$ is the inverse function of $\tilde{g}(1,x, \cdot)$. 
Note that both sides are strictly monotone in $y\in \mathscr S_{\tilde{g}(1,X,V)|X=x}$ since $\Delta_d(\cdot,x)$ is strictly monotone on $\mathscr S_{V|X=x}$.

Combining the above result with \eqref{eqn:a1}, we have 
\[
\tilde{g}(0,x,\tilde{g}^{-1}(1,x,y))= y-\delta(x), \ \ \ \forall x\in\mathscr S_X, \ y\in  \mathscr S_{\tilde{g}(1,x,V)|X=x}.
\] 
Let $y= \tilde{g}(1,x,\tau)$ for $\tau \in \mathscr{S}_{V|X=x}$. 
It follows that 
\[
\tilde{g}(0,x,\tau) = \tilde{g}(1,x,\tau) -\delta(x),
\]
which gives us the result by \Cref{prop:equivalence bw hetero and separability}.
\end{proof}

\subsection{Proof of \Cref{thm:heterogeneity and heteroscadesticity}}
\begin{proof}
For notational simplicity, let $W^* = W - \mathbb{E}(W|X)$. For the ``only if'' part, under $\mathbb{H}^{\dag}_{0}$, since $W^* = h(1,X,\epsilon) - \Exp[h(1,X,\epsilon)|X]$ is a function of $\epsilon$, we have $W^* \indep (X,Z)$.

For the ``if'' part, first note that $W^* \indep Z | X$ is equivalent to $W \indep Z | X$. 
Then, by \Cref{thm:equivalence bw separability and conditinoal Indep}, $g(D,X,\epsilon) = m(D,X) + \nu(X,\epsilon)$ and
\[
W^* = m(1,X) + \nu(X, \epsilon) - \Exp[m(1,X) + \nu(X, \epsilon)|X] = \nu(X, \epsilon) - \Exp[\nu(X, \epsilon)|X] = \overline{\nu}(X,\epsilon)
\]
is a function of $X$ and $\epsilon$. 
Note that $\overline{\nu}(X,\epsilon)$ is strictly increasing in $\epsilon$. 
We now show that there exists a measurable function $\nu_1: \mathbb{R} \mapsto \mathbb{R}$ such that $\nu(x,e) = \nu_1(e)$ almost everywhere.
To see this, first note that $\epsilon \indep X$ implies $\Pr(\epsilon > e | X=x) = \Pr(\epsilon > e)$ for almost all $x$.
We also have
\begin{align*}
\Pr(\nu(x,\epsilon) > \tilde{e} | X=x) = \Pr(\epsilon > \nu^{-1}(x,\tilde{e})|X=x) = \Pr(\epsilon >  \nu^{-1}(x,\tilde{e}))
\end{align*}
where $\nu^{-1}$ is the inverse function of $\nu$ with respect to its second argument.
Since the c.d.f. of $\epsilon$ is strictly increasing and $\Pr(\nu(x,\epsilon) > \tilde{e}|X=x)$ does not depend $x$, $\nu^{-1}(x,\tilde{e})$ must be constant in $x$ for almost all $(x,\tilde{e})$. 
Thus, $\nu(x,e)$ is constant in $x$ almost everywhere and this completes the proof.

\end{proof}

\subsection{Proof of \Cref{thm:test_noX}}
\begin{proof}
For discrete $X$, we have $\hat\delta^*(\cdot)=\delta^*(\cdot)+O_p(n^{-1/2})$ by the central limit theorem. 
The $O_p(n^{-1/2})$ term is uniformly over the finite support $\mathscr S_X$. Moreover, let $\mathbb{1}_{WXZ}(w,x,z;\delta) = \mathbb 1(W \leq w) \cdot \mathbb 1_{XZ}(x,z)$.
By definition, we have
\[
F_{W|XZ}(w|x,z_\ell)=\frac{\Exp [\mathbb{1}_{WXZ} (w,x,z_\ell;\delta)]}{\Exp [\mathbb 1_{XZ}(x,z)]}\ \text{ and }\ \hat{F}_{W|XZ}(w|x,z) = \frac{\Exp_n [\mathbb{1}_{WXZ}(w,x,z;\hat{\delta})]}{\mathbb E_n [\mathbb 1_{XZ}(x,z)]}.
\]
Note that 
\begin{align*}
&\Exp_n [\mathbb{1}_{WXZ} (\cdot, x, z; \hat{\delta})] 
= \Exp_n [ \mathbb{1}_{WXZ} (\cdot, x, z; \delta)] - \Exp [\mathbb{1}_{WXZ} (\cdot, x, z; \delta)]\\
&\qquad+ \Big\{ \Exp_n [\mathbb{1}_{WXZ} (\cdot, x, z; \hat{\delta})] - \Exp [\mathbb{1}_{WXZ}(\cdot, x, z; \hat{\delta})] - \Exp_n [ \mathbb{1}_{WXZ} (\cdot, x, z; \delta)] + \Exp [\mathbb{1}_{WXZ} (\cdot, x, z; \delta) ] \Big\} \\ 
&\qquad+\Exp [ \mathbb{1}_{WXZ} (\cdot, x, z; \hat{\delta})].
\end{align*}
Since $\hat{\delta}$ is a consistent estimator of $\delta$, by the empirical process theory \citep[see e.g.][]{van2007empirical}, we have
\[
\mathbb E_n [ \mathbb{1}_{WXZ} (\cdot, x, z; \hat{\delta})] - \mathbb E [ \mathbb{1}_{WXZ} (\cdot, x, z; \hat{\delta})] - \mathbb E_n [ \mathbb{1}_{WXZ} (\cdot, x, z; \delta)] + \mathbb E [\mathbb{1}_{WXZ} (\cdot, x, z ;\delta)] =o_p(n^{-1/2}).
\]
Moreover, by Taylor expansion, 
\[
\sqrt n\ \mathbb E [g(\cdot, x, z; \hat{\delta})] = \sqrt{n}\ \Pr(W \leq w, X = x, Z = z) + \frac{\partial \mathbb E [\mathbb{1}_{WXZ}(w, x, z; {\delta})]}{\partial {\delta}} \times \sqrt{n} (\hat{\delta} - \delta)+o_p(1)
\] 
where 
\[
\frac{\partial \mathbb{E} [ \mathbb{1}_{WXZ} (w, x, z; \delta)]}{\partial {\delta}(x')}=0\ \text{for all } x'\neq x
\] 
and 
\begin{align*}
\frac{\partial \mathbb{E} [\mathbb{1}_{WXZ} (w, x, z;\delta)]}{\partial {\delta}(x)} &= -f_{Y|DXZ}(w-\delta(x)|0,x,z)\times \Pr (D=0,X=x,Z=z) \\
&= -f_{W|DXZ}(w|0,x,z)\times \Pr (D=0,X=x,Z=z).
\end{align*} 

It follows that
\begin{multline*}
\sqrt n \Exp_n [\mathbb{1}_{WXZ} (\cdot, x, z; \hat{\delta})]
= \sqrt n \left\{\Exp_n [ \mathbb{1}_{WXZ} (\cdot, x, z; \delta)] - \Exp [\mathbb{1}_{WXZ} (\cdot, x, z; \delta)]\right\}\\
+ \sqrt n\  \Pr(W \leq w, X = x, Z = z) + \frac{\partial \mathbb E [\mathbb{1}_{WXZ}(w, x, z; {\delta})]}{\partial {\delta}(x)} \times \sqrt{n} (\hat{\delta}(x) - \delta(x))+o_p(1).
\end{multline*}
Moreover, by the central limit theorem, we have
\[
\mathbb E_n [\mathbb 1_{XZ}(x,z_\ell)]=\mathbb P (X=x,Z=z) + O_p(n^{-1/2}).
\]

Thus,
\begin{align*}
&\sqrt{n} \left[\hat F_{W|XZ}(w|x,1)-\hat F_{W|XZ}(w |x,0)\right]\\
&= \frac{\sqrt{n} \big\{\mathbb E_n [\mathbb{1}_{WXZ}(w,x,1;\delta)] - \mathbb E [\mathbb{1}_{WXZ}(w,x,1;\delta)] \big\} + \frac{\partial \mathbb{E} [\mathbb{1}_{WXZ}(w, x, 1;\delta)]}{\partial {\delta}(x)} \cdot \sqrt{n} [\hat{\delta}(x)-\delta(x)]}{\mathbb E_n \mathbb 1 _{XZ}(x,1)} \\
&\quad-\frac{\sqrt{n} \big\{\mathbb  E_n [\mathbb{1}_{WXZ}(w,x,0;\delta)]  - \mathbb E [\mathbb{1}_{WXZ}(w,x,0;\delta)] \big\} + \frac{\partial \mathbb{E} [\mathbb{1}_{WXZ} (w, x, 0;\delta)]}{\partial {\delta}(x)} \cdot \sqrt{n} [\hat{\delta}(x)-\delta(x)]}{\mathbb E_n \mathbb 1 _{XZ}(x,0)} \\
&\quad+\frac{\sqrt{n} \Pr(W \leq w, X = x, Z = 1)}{\mathbb E_n \mathbb 1_{XZ}(x,1)} -\frac{\sqrt{n} \Pr(W \leq w, X = x, Z = 0)}{\mathbb E_n \mathbb 1 _{XZ}(x,0)} +o_p(1)\\
&= \frac{\sqrt{n} \big\{\mathbb E_n [\mathbb{1}_{WXZ} (w,x,1;\delta)] - \mathbb E [\mathbb{1}_{WXZ} (w,x,1;\delta)] \big\} + \frac{\partial \mathbb{E} [\mathbb{1}_{WXZ} (w, x, 1;\delta)]}{\partial {\delta}(x)} \cdot \sqrt{n} [\hat{\delta}(x)-\delta(x)]}{\mathbb P(X=x,Z=1)} \\
&\quad-\frac{\sqrt{n} \big\{\mathbb  E_n [ \mathbb{1}_{WXZ} (w,x,0;\delta)]  - \mathbb{E} [\mathbb{1}_{WXZ}(w,x,0;\delta)] \big\} + \frac{\partial \mathbb{E} [\mathbb{1}_{WXZ} (w, x, 0;\delta)]}{\partial {\delta}(x)} \cdot \sqrt{n} [\hat{\delta}(x)-\delta(x)]}{\mathbb P(X=x,Z=0)} \\
&\quad+\frac{\sqrt{n} \Pr(W \leq w, X = x, Z = 1)}{\mathbb E_n \mathbb 1_{XZ}(x,1)} -\frac{\sqrt{n} \Pr(W \leq w, X = x, Z = 0)}{\mathbb E_n \mathbb 1 _{XZ}(x,0)} +o_p(1).
\end{align*}
Moreover, we apply Taylor expansion and have 
\begin{align*}
&\frac{\sqrt{n} \Pr(W \leq w, X = x, Z = z)}{\mathbb E_n \mathbb 1_{XZ}(x,z)}\\ &=\frac{\sqrt{n} \Pr(W \leq w, X = x, Z = z)}{\mathbb E \mathbb 1_{XZ}(x,z)} - \frac{ \Pr(W \leq w, X = x, Z = z)}{\mathbb E^2 \mathbb 1_{XZ}(x,z)}\times \sqrt{n}\big[\mathbb E_n \mathbb 1 _{XZ}(x,z)-\mathbb E \mathbb 1 _{XZ}(x,z)\big] +o_p(1)\\
&= \sqrt{n}\ F_{W|X}(w|x)-F_{W|X}(w|x)\times \frac{\sqrt n  \big[\mathbb E_n \mathbb 1_{XZ}(x,z) - \mathbb E  \mathbb 1_{XZ}(x,z)\big]}{\Pr(X=x,Z=z)} +o_p(1)
\end{align*}
where the last step comes from $F_{W|XZ}(\cdot|x,z)=F_{W|X}(\cdot|x)$ under the  null hypothesis.  
Note that
\begin{eqnarray*}
F_{W|X}(w|x) \times \Exp \mathbb 1_{XZ}(x,z)  = \Pr(W \le w, X=x, Z=z)=\mathbb E [\mathbb{1}_{WXZ}(w,x,z;\delta)] .
\end{eqnarray*}
Thus, we have
\begin{align*}
&\frac{\sqrt{n} \Pr(W \leq w, X = x, Z = z)}{\mathbb E_n \mathbb 1_{XZ}(x,z)}\\
&\quad=\sqrt{n}\ F_{W|X}(w|x)- \frac{\sqrt n  \left\{F_{W|X}(w|x)\times\mathbb E_n \mathbb 1_{XZ}(x,z) - \mathbb E [\mathbb{1}_{WXZ}(w,x,z;\delta)] \right\}}{\Pr(X=x,Z=z)}.
\end{align*}

Summarizing above steps, we now have
\begin{align*}
&\sqrt{n} \left[\hat F_{W|XZ}(w|x,1) - \hat F_{W|XZ}(w |x,0)\right]\\
&=\sqrt{n} \Exp_n \left\{[\Indi(W \le w) - F_{W|X}(w|x)] \times \left[ \frac{\mathbb 1_{XZ}(x,1)}{ \mathbb P (X=x,Z=1)} -  \frac{\mathbb 1_{XZ}(x,0)}{ \mathbb P (X=x,Z=0)}  \right ] \right\}\\
&\quad+ \left[ f_{WD|XZ}(w,0|x,0)-f_{WD|XZ}(w, 0|x,1)\right]\times \sqrt n [\hat \delta(x)-\delta(x)]+o_p(1).
\end{align*}

To compute the covariance kernel, we first derive $\hat{\delta} - \delta$.
Fix $X=x$. For expositional simplicity, we suppress $x$ in the following notation. Note that 
\[
\hat \delta=\frac{\mathbb A_{n}(1) \mathbb C_n(0)-\mathbb A_{n}(0)\mathbb C_n(1)}{\mathbb B_n(1) \mathbb C_n(0)-\mathbb B_n(0)\mathbb C_n(1)} \ \ \text{ and } \  \delta=\frac{\mathbb A(1) \mathbb C(0)-\mathbb A(0)\mathbb C(1)}{\mathbb B(1) \mathbb C(0)-\mathbb B(0)\mathbb C(1)}, 
\] where $\mathbb A _{n}(z)=\mathbb E_n [Y\cdot \mathbb 1 (X=x,Z=z)]$, $\mathbb B_{n}(z) =\mathbb E_n [D\cdot \mathbb 1 (X=x,Z=z)]$, $\mathbb C_{n}(z) =\mathbb E_n \ \mathbb 1 (X=x,Z=z)$, $\mathbb A (z)=\mathbb E [Y\cdot \mathbb 1 (X=x,Z=z)]$, $\mathbb B (z)=\mathbb E [D\cdot \mathbb 1 (X=x,Z=z)]$, and $ \mathbb C (z)=\mathbb P (X=x,Z=z)$. Therefore, 
\begin{multline*}
\hat \delta-\delta=\frac{\mathbb A_{n}(1) \mathbb C_n(0)-\mathbb A_{n}(0)\mathbb C_n(1)-[\mathbb A(1) \mathbb C(0)-\mathbb A(0)\mathbb C(1)]}{\mathbb B_n(1) \mathbb C_n(0)-\mathbb B_n(0)\mathbb C_n(1)}\\
+\left\{ \frac{\mathbb A(1) \mathbb C(0)-\mathbb A(0)\mathbb C(1)}{\mathbb B_n(1) \mathbb C_n(0)-\mathbb B_n(0)\mathbb C_n(1)}-\frac{\mathbb A(1) \mathbb C(0)-\mathbb A(0)\mathbb C(1)}{\mathbb B(1) \mathbb C(0)-\mathbb B(0)\mathbb C(1)}\right\}
\equiv \mathbb I+ \mathbb {II}.
\end{multline*}
Let  $p_c(x)=\mathbb P(X=x,Z=1) - \Pr(X=x,Z=0)$, which is strictly positive under \Cref{assmp:relevant IV}.
 
Note that
\begin{align*}
\mathbb I &= \frac{\left[\mathbb A_{n}(1)-\mathbb A(1)\right]\cdot  \mathbb C_n(0) + \mathbb A(1)\cdot \left[\mathbb C_n(0)-\mathbb C(0)\right]}{\mathbb B_n(1) \mathbb C_n(0)-\mathbb B_n(0)\mathbb C_n(1)}\\
&\quad- \frac{\left[\mathbb A_{n}(0)-\mathbb A(0)\right]\cdot \mathbb C_n(1)+\mathbb A(0)\cdot \left[ \mathbb C_n(1)-\mathbb C(1)\right]}{\mathbb B_n(1) \mathbb C_n(0)-\mathbb B_n(0)\mathbb C_n(1)}\\
&=\frac{\left[\mathbb A_{n}(1)-\mathbb A(1)\right]\cdot  \mathbb C(0) + \mathbb A(1)\cdot \left[\mathbb C_n(0)-\mathbb C(0)\right]}{\mathbb B(1) \mathbb C(0)-\mathbb B(0)\mathbb C(1)}\\
&\quad- \frac{\left[\mathbb A_{n}(0)-\mathbb A(0)\right]\cdot \mathbb C(1)+\mathbb A(0)\cdot \left[ \mathbb C_n(1)-\mathbb C(1)\right]}{\mathbb B(1) \mathbb C(0)-\mathbb B(0)\mathbb C(1)}+o_p(n^{-1/2})
\end{align*} 
where the last step comes from the fact: $\mathbb A_n(z)=\mathbb A(z)+O_p(n^{-1/2})$, $\mathbb B_n(z)=\mathbb B(z)+O_p(n^{-1/2})$ and $\mathbb C_n(z)=\mathbb C(z)+O_p(n^{-1/2})$. 
Therefore,
\begin{align*}
\mathbb I 
&= \frac{\mathbb E_n \left\{\big[Y -\mathbb E (Y|X=x,Z=0)\big]\cdot \mathbb 1_{XZ}(x,1) \right\}\times \Pr(X=x,Z=0)}{\mathbb B(1) \mathbb C(0)-\mathbb B(0)\mathbb C(x,1)}\\
&\quad- \frac{\mathbb E_n \left\{\big[Y -\mathbb E (Y|X=x,Z=1)\big]\cdot \mathbb 1_{XZ}(x,0) \right\}\times \Pr(X=x,Z=1)}{\mathbb B(1) \mathbb C(0)-\mathbb B(0)\mathbb C(1)}\\
&\quad+\frac{2\left[\mathbb A(0)\cdot  \mathbb C(1)-\mathbb A(1)\cdot  \mathbb C(0)\right] }{\mathbb B(1) \mathbb C(0)-\mathbb B(0)\mathbb C(1)}+o_p(n^{-1/2})\\
&= -\frac{1}{p_c(x)}\times \mathbb E_n \left\{\big[Y -\mathbb E (Y|X=x,Z=0)\big]\times\frac{\mathbb 1_{XZ}(x,1)}{\mathbb P(X=x,Z=1)}\right\}\\
&\quad+ \frac{1}{p_c(x)}\times \mathbb E_n \left\{\big[Y -\mathbb E (Y|X=x,Z=1)\big]\times \frac{\mathbb 1_{XZ}(x,0)}{\mathbb P (X=x,Z=0)} \right\} \\
&\quad- 2\delta(x) + o_p(n^{-1/2})
\end{align*} 
where the last step comes from
\[
\mathbb B(1) \mathbb C(0)-\mathbb B(0)\mathbb C(1)=-p_c(x)\times \mathbb P (X=x,Z=1)\times \mathbb P (X=x,Z=0).
\]
Similarly, by Taylor expansion, 
\begin{align*}
\mathbb {II}&= -\delta(x)\times \frac{\left[\mathbb B_n(1)-\mathbb B(1)\right]\cdot \mathbb C(0)+\mathbb B(1) \cdot \left[\mathbb C_n(0)-\mathbb C(0)\right]}{\mathbb B(1) \mathbb C(0)-\mathbb B(0)\mathbb C(1)}\\
&\quad+\delta(x)\times \frac{\left[\mathbb B_n(0)-\mathbb B(0)\right]\cdot \mathbb C(1)+\mathbb B(0) \cdot \left[\mathbb C_n(1)-\mathbb C(1)\right]}{\mathbb B(1) \mathbb C(0)-\mathbb B(0)\mathbb C(1)}+o_p(n^{-1/2})\\
&=\frac{1}{p_c(x)}\times \mathbb E_n \left\{\big[D -p(x,0)\big]\times  \delta(x)\times \frac{\mathbb 1_{XZ}(x,1)}{\mathbb P(X=x,Z=1)}\right\} \\
&\quad-\frac{1}{p_c(x)}\times \mathbb E_n \left\{\big[D -p(x,1)\big]\times  \delta(x)\times \frac{\mathbb 1_{XZ}(x,0)}{\mathbb P(X=x,Z=0)}\right\} \\
&\quad+ 2  \delta(x) +o_p(n^{-1/2}).
\end{align*}
Note that under the null hypothesis $\mathbb H_0$:
\begin{multline*}
\mathbb E (Y|X=x,Z=1)-\mathbb E (D|X=x,Z=1)\times \delta (x)=\mathbb E [W-\delta(X)|X=x,Z=1]\\
=\mathbb E [W-\delta(X)|X=x,Z=0]=\mathbb E (Y|X=x,Z=0)-\mathbb E (D|X=x,Z=0)\times \delta (x).
\end{multline*}
Moreover, we have
\[
Y-D\cdot \delta(X) -\mathbb E (Y|X,Z)+ \mathbb E (D|X,Z)\cdot \delta(X)= W-\mathbb E (W|X,Z)=W-\mathbb E (W|X).
\]

Thus, we have 
\begin{multline} \label{delta_influence}
\sqrt n\ [\hat \delta(x)-\delta(x)] \\
= -\frac{1}{p_c(x)}\times \sqrt n \ \mathbb E_n \left\{\overline W\times \left[\frac{\mathbb 1 _{XZ}(x,1)}{\mathbb P(X=x,Z=1)}-\frac{\mathbb 1 _{XZ}(x,0)}{\mathbb P(X=x,Z=0)}\right]\right\}+o_p(1).
\end{multline}

It follows that 
\begin{eqnarray*}
 \sqrt n \left[\hat F_{W|XZ}(w|x,1)-\hat F_{W|XZ}(w |x,0)\right] = \frac{1}{\sqrt{n}} \sumi \left( \psi_{wx} + \phi_{wx} \right).
\end{eqnarray*}
where $\psi_{wx}$ and $\phi_{wx}$ are defined by \eqref{eqn:dpsi} and \eqref{eqn:dphi}.

To conclude, the empirical process $\sqrt n \left[\hat{F}_{W|XZ}(\cdot | x, 1) - \hat {F}_{W|XZ}(\cdot |x, 0)\right]$ converges to a zero--mean Gaussian process $\mathcal{Z}(\cdot,x)$ with the given covariance kernel. Moreover, following e.g., \cite{kim1990cube}, we have $\hat{\mathcal T}_n  \overset{d}{\rightarrow} \ \sup_{w \in \mathbb{R};  \ x\in\mathscr S_X}\  |\mathcal{Z}(w,x)|$.
\end{proof}


\subsection{Proof of \Cref{lem:approximation}}
\begin{proof}
Fix $X=x$ and w.l.o.g., let $z=1$. 
Note that
\begin{align*}
&\hat G(w,x,1)-\tilde G(w,x,1) \\
&=  \mathbb E_n \left\{ \mathbb{1}^{*}_{XZ}(x,1)  \hat{q}(X,0)(w-\hat W) \left[ \mathbb 1(\hat W \leq w) - \mathbb 1( W \leq w)\right] \right\} \\
&=  \mathbb E_n \Big\{ \mathbb{1}^{*}_{XZ}(x,1) \hat{q}(X,0)(w-\hat W) \left[ \mathbb 1(\hat W \leq w) - \mathbb 1( W \leq w) \right] \times \mathbb 1 (|W-w|\leq  n^{-r}) \Big\}\\
&\quad+ \mathbb E_n \Big\{ \mathbb{1}^{*}_{XZ}(x,1)  \hat{q}(X,0)(w-\hat W)  \left[ \mathbb 1(\hat W \leq w)-\mathbb 1( W \leq w)\right] \times  \mathbb 1 (|W-w|>  n^{-r}) \Big\}\\
&\equiv \mathbb T_1+\mathbb T_2
\end{align*}where $r\in(\frac{1}{4}, \iota)$. It suffices to show both $\mathbb T_1$ and $\mathbb T_2$ are $o_p(n^{-\frac{1}{2}})$.

For term $\mathbb T_1$, note that 
\begin{multline*}
\mathbb T_1= \mathbb E_n\Big\{\mathbb{1}^{*}_{XZ}(x,1) \hat{q}(X, 0)  (w- W)  \left[ \mathbb 1(\hat W \leq w)-\mathbb 1( W\leq w) \right] \times  \mathbb 1 (|W-w|\leq  n^{-r})\Big\}\\
+\mathbb E_n \Big\{\mathbb{1}^{*}_{XZ}(x,1) \hat{q}(X,0)( W- \hat W)\left[ \mathbb 1(\hat W \leq w)-\mathbb 1( W \leq w)\right] \times  \mathbb 1 (|W-w|\leq  n^{-r})\Big\}. 
\end{multline*}
Because 
\begin{align*}
&\mathbb E\left|\mathbb{1}^{*}_{XZ}(x,1) \hat{q}(X, 0)  (w- W)  \left[ \mathbb 1(\hat W \leq w)-\mathbb 1( W\leq w) \right] \times  \mathbb 1 (|W-w|\leq  n^{-r})\right| \\
&\quad\leq \mathbb E \left| \hat{q}(X_1,z_{2})\times (w- W)\times  \mathbb 1 (|W-w|\leq  n^{-r})\right|=O(1)\times O(n^{-2r})=o(n^{-\frac{1}{2}}),
\end{align*}
where last step holds because $r>\frac{1}{4}$. Moreover, 
\begin{align*}
&\mathbb E \left|\mathbb{1}^{*}_{XZ}(x,1) \hat{q}(X,0)( W- \hat W)\left[ \mathbb 1(\hat W \leq w)-\mathbb 1( W \leq w)\right] \times  \mathbb 1 (|W-w|\leq  n^{-r})\right|\\
&\quad\leq \mathbb E \left| \hat{q}(X_1,z_{2})(W-\hat W)\times  \mathbb 1 (|W-w|\leq  n^{-r})\right|=O(1)\times O(n^{-\iota})\times O(n^{-r})=o(n^{-\frac{1}{2}}).
\end{align*}
Then, we have $\mathbb T_1 = o_p(n^{-\frac{1}{2}})$. 
 
Next, for term $ \mathbb T_2$, note that
\begin{eqnarray*}
\mathbb E | \mathbb T_2 |&\leq& \frac{\overline K}{h}\times \mathbb E\left[|w-\hat W|\times \mathbb 1 (|\hat W-W|>n^{-r})\right]\\
 &\leq& \frac{\overline K}{h}\times \sqrt {\mathbb E (w-\hat W)^2}\times \sqrt {\mathbb P \left(|\hat W-W|>n^{-r}\right)}\\
 &\leq&\frac{\overline K}{h}\times \sqrt{\mathbb E \hat W^2-2w\cdot \mathbb E (\hat W)+ w^2} \times \sqrt {\mathbb P \left[|\hat \delta(X)-\delta(X)|>n^{-r}\right]},
\end{eqnarray*}
where $\overline K$ is the upper bound of $K(\cdot)$. Because $W$ is a bounded random variable and $w$ belongs to a compact set, then $ \sqrt{\mathbb E \hat W^2-2w\cdot \mathbb E (\hat W)+ w^2}=O(1)$. Moreover, by \Cref{lem:smalldistance1}, $\mathbb E | \mathbb T_2|\leq o(n^{-k})$ for any $k>0$. Hence, $ \mathbb{T}_2 = o_p(n^{-\frac{1}{2}})$.
\end{proof}

\subsection{Proof of \Cref{thm:test_contX}}
\begin{proof}
By \Cref{lem:approximation}, we have 
\[
\hat {\mathcal T}^c_n=\sqrt{n} \left| \tilde G(w, x; 1) - \tilde G(w, x; 0) \right|+o_p(1).
\] Let $\mathbb 1^*_{WXZ}(w,x,z)\equiv \mathbb 1 (W\leq w,X\leq x,Z=z)$. 
Note that 
\[
\tilde G(w,x,z_\ell) = \mathbb U_{1}(w,x;z_\ell)+\mathbb U_{2}(w,x;z_\ell)+o_p(n^{-1/2})
\]
where  $ \mathbb U_{1}(w,x;z_\ell)\equiv  \frac{1}{n}\sum_{i=1}^n  [\Indi^{*}_{W_iX_iZ_i}(w,x,z_{\ell})\times  \hat{q}(X_i,z_{-\ell})\times(W_i-\hat W_i)]$ and  $\mathbb U_{2}(w,x;z_\ell)\equiv \frac{1}{n}\sum_{i=1}^n [\Indi^{*}_{W_iX_iZ_i}(w,x,z_{\ell}) \times  \hat{q}(X_i,z_{-\ell})\times(w-W_i) ]$. Therefore,
\begin{align*}
&\sqrt{n} \left[ \tilde G(w, x; 1) - \tilde G(w, x; 0) \right]\\
&=\sqrt{n} \left[ \mathbb U_1(w, x; 1) - \mathbb U_1(w, x; 0) \right]+\sqrt{n} \left[ \mathbb U_2(w, x; 1) - \mathbb U_2(w, x; 0) \right]\\
&=\sqrt{n} \left\{ \mathbb U_1(w, x; 1) - \mathbb U_1(w, x; 0)-\left[\mathbb E \mathbb U_1(w,x;1)-\mathbb E \mathbb U_1(w,x;0)\right]\right\}\\
&\quad+\sqrt{n} \left\{ \mathbb U_2(w, x; 1) - \mathbb U_2(w, x; 0)-\left[\mathbb E \mathbb U_2(w,x;1)-\mathbb E \mathbb U_2(w,x;0)\right]\right\}\\
&\quad+\sqrt n\left[\mathbb E \mathbb U_1(w,x;1)-\mathbb E \mathbb U_1(w,x;0)\right]+\sqrt n\left[\mathbb E \mathbb U_2(w,x;1)-\mathbb E \mathbb U_2(w,x;0)\right].
\end{align*}

We first look at those terms with $\mathbb U_2$.  By definition, 
\begin{align*}
&\mathbb U_2(w,x;z)\\
=&\frac{1}{n(n-1)}\sum_{i=1}^n\sum_{j\neq i}   \{ \Indi^{*}_{X_iZ_i}(x,z) \lambda(W_i-w) \times \frac{1}{h_q} K_q(\frac{X_j-X_i}{h_q})\mathbb 1(Z_j=1-z) \} \\
=&  \frac{1}{n(n-1)}\sum_{i=1}^n\sum_{j\neq i} \zeta_{n,ij}(w,w,z_\ell)
\end{align*} 
where $\zeta_{n,ij}(w,x,z)=\Indi^{*}_{X_iZ_i}(x,z) \lambda(W_i-w) \times \frac{1}{h_q} K_q(\frac{X_j-X_i}{h_q})\mathbb 1(Z_j=1-z)$.

Let $ \zeta^*_{n,ij}(w,x,z)=\frac{1}{2}\left[ \zeta_{n,ij}(w,x,z)+ \zeta_{n,ji}(w,x,z)\right]$. 
Then, $ \zeta^*_{n,ij}$ is symmetric in indices $i$ and $j$. 
Therefore,
\[
\mathbb U_{2}(w,x,z)= \frac{1}{n(n-1)}\sum_{i=1}^n\sum_{j\neq i}\zeta^*_{n,ij}(w,x,z),
\]
which is a $\mathcal{U}$-process indexed by $(w,x,z_\ell)$.  By \citet[Theorem 5]{nolan1988functional} and \citet[Lemma 3.1]{powell1989semiparametric},
\begin{eqnarray*}
&&\mathbb U_{2}(w,x, z)-\mathbb E\mathbb U_2(w,x,z)\\
&=& \frac{2}{n}\sum_{i=1}^n\left\{ \mathbb E[\zeta^*_{n,ij}(w,x,z)|Y_i,D_i,X_i,Z_i]- \mathbb E[\zeta^*_{n,ij}(w,x,z)]\right\}+ o_p(n^{-1/2}).
\end{eqnarray*}where the $o_p(n^{-1/2})$ applies uniformly over $(w,x)$.  Note that 
\begin{multline*}
\mathbb E[\zeta^*_{n,ij}(w,x,z)|Y_i,D_i,X_i,Z_i]\\
=\frac{1}{2}\Big\{\mathbb{1}^*_{XZ}(x,z_\ell) q(X,1-z) \lambda(W-w)
+\mathbb{1}^*_{XZ}(x,1-z) q(X,z) \Pi (w|X,z)\Big\}+o_p(1).
\end{multline*}


We now derive $\mathbb E [\zeta^*_{n,ij}(w,x,z)]$. 
Let $\mu_1(w,x,z)=\mathbb E [\mathbb{1}^*_{XZ}(x,z) q(X,1-z) \lambda(W-w)]$ and $\mu_2(w, x,z)=\mathbb E [\mathbb{1}^*_{XZ}(x,1-z) q(X,z) \Pi (w|X,z)]$.
Note that 
\[
\mu_1(w,x,z)=\mu_2(w,x,z)=\int \mathbb 1 (X\leq x) \Pi(w|X)f_{X|Z}(X|1)f_{X|Z}(X|0)d X\times \mathbb P(Z=1)\mathbb P(Z=0)
\] 
under the $\mathbb H^G_0$, which are invariant with $z$. 
Therefore, $\mathbb E [\zeta^*_{n,ij}(w,x,z)]=\frac{1}{2}[\mu_1(w,x,z)+\mu_2(w,x,z)]$ is also invariant with $z$. Let $\mu^*(w,x)=\mathbb E [\zeta^*_{n,ij}(w,x,z)]$.

By \citet[Theorem 3.1]{powell1989semiparametric}, 
\begin{align*}
&\frac{1}{\sqrt n}\sum_{i=1}^n\left\{ \mathbb E[\zeta^*_{n,ij}(w,x,z_\ell)|Y_i,D_i,X_i]- \mathbb E[\zeta^*_{n,ij}(w,x,z_\ell)]\right\} \\
&= \mathbb E_n\left\{\mathbb{1}^*_{XZ}(x,z_{\ell}) q(X,z_{-\ell}) \lambda(W-w)-\mu^*(w,x) \right\}\\
&\quad+\mathbb E_n\left\{\mathbb{1}^*_{XZ}(x,z_{-\ell}) q(X,z_{\ell}) \Pi(w|X,z_\ell)-\mu^*(w,x) \right\}+ o_p(n^{-\frac{1}{2}}), 
\end{align*}
where the $o_p(n^{-1/2})$ holds uniformly over $(w,x)$. It follows that 
\begin{align*}
&\mathbb U_2(w, x; 1) - \mathbb U_2(w, x; 0)-\left[\mathbb E \mathbb U_2(w,x;1)-\mathbb E \mathbb U_2(w,x;0)\right] 
=  \mathbb E_n\psi^c_{wx} +o_p(n^{-\frac{1}{2}}).
\end{align*}

We now turn to $\mathbb U_1(w,x,z)$. 
Note that
\[
\mathbb U_1(w,x;z) = -\frac{1}{n}\sum_{i=1}^n \left \{ \Indi^{*}_{W_iX_iZ_i}(w,x,z)  q(X_i,1-z) (1-D_i)  [\hat {\delta}(X_i)- \delta(X_i)] \right\} +o_p(n^{-\frac{1}{2}}),
\]
provided that  $\sup_{x\in\mathscr S_X}\left|\left[\hat q(x,z)- q(x,z)\right]\times \left[\hat \delta(x) -\delta(x)\right]\right|=o_p(n^{-\frac{1}{2}})$ holds.   
By a similar decomposition argument on $\hat \delta (X)-\delta(X)$ in \Cref{lem:smalldistance1}, we have  
\begin{align*}
\mathbb U_1(w,x;z)
&=-\frac{1}{n(n-1)}\sum_{i=1}^n\sum_{j\neq i} \Bigg \{ \Indi^{*}_{W_iX_iZ_i}(w,x,z)  q(X_i,1-z) (1-D_i) \\
&\qquad\times \frac{\overline W_{ji}\frac{1}{h}K(\frac{X_j-X_i}{h}) }{p(X_i,1)-p(X_i,0)}\left[\frac{ \mathbb 1(Z_j=1)}{q(X_i,1)}
-\frac{ \mathbb 1(Z_j=0)}{q(X_i,0)}\right]\Bigg\}+o_p(n^{-1/2})\\
&=-\frac{1}{n(n-1)}\sum_{i=1}^n\sum_{j\neq i}  \xi_{n,ij}(w,x,z_\ell)+o_p(n^{-1/2})
\end{align*} 
where $\overline W_{ji}= W_j-\mathbb E (W_j|X_i)$ and 
\begin{align*}
\xi_{n,ij}(w,x,z) &= \Indi^{*}_{W_iX_iZ_i}(w,x,z)  q(X_i,1-z) (1-D_i) \times \\
&\quad \frac{\overline W_{ji}\frac{1}{h}K(\frac{X_j-X_i}{h}) }{p(X_i,1)-p(X_i,0)}\left[\frac{ \mathbb 1(Z_j=1)}{q(X_i,1)}
-\frac{ \mathbb 1(Z_j=0)}{q(X_i,0)}\right].
\end{align*}
Moreover,  Let $ \xi^*_{n,ij}(w,x,z)=\frac{1}{2}[ \xi_{n,ij}(w,x,z)+ \xi_{n,ji}(w,x,z)]$.  
By a similar argument as that for $\mathbb U_2$, we have 
\begin{multline*}
\mathbb U_{1}(w,x,z) - \mathbb E \mathbb U_{1}(w,x,z)\\
= -\frac{2}{n}\sum_{i=1}^n\left\{ \mathbb E[\xi^*_{n,ij}(w,x,z)|Y_i,D_i,X_i,Z_i]- \mathbb E[\xi^*_{n,ij}(w,x,z)]\right\}+ o_p(n^{-1/2}).
\end{multline*}
Because
\begin{align*}
&\Exp [\xi_{n,ji}(w,x,z)|Y_i,D_i,X_i,Z_i] \\
&=\Exp \left\{\mathbb E [\xi_{n,ji}(w,x,z)|X_j,Z_j,Y_i,D_i,X_i,Z_i]\big|Y_i,D_i,X_i,Z_i\right\}\\
&= \Exp \Bigg\{\Indi^{*}_{X_jZ_j}(x,z)  q(X_j,1-z) \Pr(W\leq w;D=0|X_j,Z_j)[W_i-\mathbb E(W|X_j)] \\
&\quad\times  \frac{\frac{1}{h}K(\frac{X_i-X_j}{h}) }{p(X_j,1)-p(X_j,0)}\left[\frac{ \mathbb 1(Z_i=1)}{q(X_j,1)}-\frac{ \mathbb 1(Z_i=0)}{q(X_j,0)}\right]\Big| Y_i,D_i,X_i,Z_i\Bigg\}\\
&= \1 (X_i\leq x)  q(X_i,1-z) \Pr(W\leq w;D=0|X_i,Z_i=z) [W_i-\mathbb E(W|X_i)]\\
&\quad\times  \frac{q(X_i,z)}{p(X_i,1)-p(X_i,0)}\left[\frac{ \1 (Z_i=1)}{q(X_i,1)}-\frac{ \mathbb 1(Z_i=0)}{q(X_i,0)}\right]+o_p(1)
\end{align*}
where the last step comes from the Bochner's Lemma and uses the fact the integrant equals zero if $Z_j=1-z$, and the expectation of the first term in the above equation equals to zero. 


Thus, we have
\begin{multline*}
\mathbb U_{1}(w,x, z)-\mathbb E \mathbb U_{1}(w,x, z)\\
=- \mathbb E_n\Bigg\{    \overline W\times \frac{F_{WD|XZ}(w,0|X,z)}{p(X,1)-p(X,0)}\times \left[\frac{ \mathbb 1^*_{XZ}(x,1)}{q(X,1)}-\frac{ \mathbb 1^*_{XZ}(x,0)}{q(X,0)}\right] \\
\times q(X,1)\times q(X,0)\Bigg\} + o_p(n^{-\frac{1}{2}}), 
\end{multline*}where the $o_p(n^{-1/2})$ holds uniformly over $(w,x)$. 
Moreover,
\[
\mathbb U_{1}(w,x, 1)-\mathbb E \mathbb U_{1}(w,x, 1)-[\mathbb U_{1}(w,x, 0)-\mathbb E \mathbb U_{1}(w,x, 0)]= \mathbb E_n\phi^c_{wx} + o_p(n^{-\frac{1}{2}}).
\]

Finally, by \Cref{assmp:bias},
\[
\mathbb E \mathbb U_2(w,x;z)=\int_{-\infty}^x q(X,1)q(X,0) \Pi(w|X,Z=z) dX+ o_p(n^{-\frac{1}{2}})
\]which is invariant with $z$ under $\mathbb H^G_0$, 
and
\[
\mathbb E \mathbb U_1(w,x;z)=o_p(n^{-\frac{1}{2}}).
\]
Therefore, 
\[
\sqrt n\left[\mathbb E \mathbb U_1(w,x;1)-\mathbb E \mathbb U_1(w,x;0)\right]+\sqrt n\left[\mathbb E \mathbb U_2(w,x;1)-\mathbb E \mathbb U_2(w,x;0)\right]=o_p(n^{-\frac{1}{2}}).
\]

To conclude, the empirical process $\sqrt{n} [\tilde{G}(w,x;1) - \tilde{G}(w,x;0)]$ is asymptotically equivalent to the following process
\[
\sqrt n \times \mathbb E_n (\psi^c_{wx}+\phi^c_{wx}).
\]which converges to a zero-mean Gaussian process $\mathcal{Z}(\cdot,x)$ with the given covariance kernel.
\end{proof}

\section{Technical Lemmas}
\begin{lemma}
\label{lem:smalldistance1}
Suppose \Cref{assmp:densitysmoothness,assmp:bandwidth,assmp:kernel} hold. Then for any $k>0$ and $r\in(\frac{1}{4},\iota)$, 
\[
\sup_{x\in\mathscr S_X}n^k \times \Pr\left[|\hat\delta(x)-\delta(x)|> n^{-r}\right]\rightarrow 0.
\]
\end{lemma}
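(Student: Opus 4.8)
The plan is to show that this deviation probability is \emph{super-polynomially} small, by reducing $\hat\delta(x)-\delta(x)$ to deviations of elementary kernel averages and then applying a Bernstein-type exponential inequality, with the bandwidth rate in \Cref{assmp:bandwidth} providing the decay. Write $\hat\delta(x)=\hat\nu(x)/\hat\pi(x)$ with $\hat\nu(x)=\hat\mu(x,1)-\hat\mu(x,0)$ and $\hat\pi(x)=\hat p(x,1)-\hat p(x,0)$, and clear the common factor $\hat q(x,1)\hat q(x,0)$ to obtain
\[
\hat\delta(x)=\frac{\hat m_Y(x,1)\hat q(x,0)-\hat m_Y(x,0)\hat q(x,1)}{\hat m_D(x,1)\hat q(x,0)-\hat m_D(x,0)\hat q(x,1)},
\]
where $\hat q(x,z)=\tfrac{1}{nh}\sum_j K(\tfrac{X_j-x}{h})\1(Z_j=z)$ and $\hat m_D(x,z),\hat m_Y(x,z)$ are defined analogously with $D_j$, resp.\ $Y_j$, inserted (using the leave-one-out rather than the full-sample normalisation changes each average by a negligible $O(1/(nh))=o(n^{-r})$ term, which I absorb). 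Stacking the six averages $\{\hat q(x,z),\hat m_D(x,z),\hat m_Y(x,z):z=0,1\}$ into a vector $\hat\Theta(x)$ with population analogue $\Theta(x)$, the numerator and denominator above are bilinear—hence Lipschitz on bounded sets—in $\hat\Theta(x)$, and the population denominator equals $q(x,1)q(x,0)\,[p(x,1)-p(x,0)]$, which is bounded away from zero uniformly on $\mathscr S_X$ by \Cref{assmp:relevant IV} together with the support conditions in \Cref{assmp:densitysmoothness} (in particular each $q(x,z)$ is bounded below). The first—and main—step is therefore to show that for some $c_0>0$ not depending on $x$ and all $n$ large,
\[
\bigl\{\,|\hat\delta(x)-\delta(x)|>n^{-r}\,\bigr\}\ \subseteq\ \bigcup_{\ell=1}^{6}\bigl\{\,|\hat\Theta_\ell(x)-\Theta_\ell(x)|>c_0\,n^{-r}\,\bigr\}.
\]

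Next I would strip off the smoothing bias: \Cref{assmp:kernel} gives $\sup_x\max_\ell|\Exp\hat\Theta_\ell(x)-\Theta_\ell(x)|=O(n^{-\iota})=o(n^{-r})$ because $r<\iota$, so for $n$ large each event on the right above is contained in $\{|\hat\Theta_\ell(x)-\Exp\hat\Theta_\ell(x)|>\tfrac12 c_0 n^{-r}\}$. Each $\hat\Theta_\ell(x)$ is an average of $n$ i.i.d.\ summands of the form $\tfrac{1}{h}\,a_j\,K(\tfrac{X_j-x}{h})\1(Z_j=z)$ with $a_j\in\{1,D_j,Y_j\}$; since $Y$ is bounded ($\mathscr S_W$ compact and $\delta$ bounded on $\mathscr S_X$) and $K$ is bounded, each summand is bounded by $M/h$ and has second moment at most $C_V/h$ (using $f_{X|Z}\le\overline f$ and $\int K^2<\infty$). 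Bernstein's inequality then yields, uniformly in $x$,
\[
\Pr\bigl[\,|\hat\Theta_\ell(x)-\Exp\hat\Theta_\ell(x)|>\tfrac12 c_0 n^{-r}\,\bigr]\ \le\ 2\exp\!\bigl(-C\,n^{1-2r}h\bigr)
\]
for a constant $C>0$ and all $n$ large (here $n^{-r}\le1$ bounds the Bernstein denominator).

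Finally I would invoke the bandwidth condition to turn this into something that dominates every polynomial. Since $n^{\iota}/\sqrt{nh}\to0$ is equivalent to $nh/n^{2\iota}\to\infty$, we have $n^{1-2r}h=(nh/n^{2\iota})\,n^{2(\iota-r)}\ge n^{2(\iota-r)}$ for $n$ large, with $2(\iota-r)>0$. Combining the three steps with a union bound,
\[
\sup_{x\in\mathscr S_X}n^{k}\,\Pr\bigl[\,|\hat\delta(x)-\delta(x)|>n^{-r}\,\bigr]\ \le\ 12\,n^{k}\exp\!\bigl(-C\,n^{2(\iota-r)}\bigr)\ \longrightarrow\ 0
\]
for every $k>0$, since a stretched exponential beats any power of $n$; uniformity in $x$ is automatic because all constants are uniform on $\mathscr S_X$.

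The main obstacle is the first step. One must argue that, on the event where every component of $\hat\Theta(x)$ lies within $c_0 n^{-r}$ of $\Theta(x)$, the estimated denominator $\hat m_D(x,1)\hat q(x,0)-\hat m_D(x,0)\hat q(x,1)$ stays bounded away from zero uniformly—this is where the lower bounds on $q(x,z)$ and on $p(x,1)-p(x,0)$ (\Cref{assmp:relevant IV}) enter—so that the Lipschitz bound for the bilinear ratio map applies and $c_0$ can be chosen (uniformly in $x$) small enough that $|\hat\delta(x)-\delta(x)|\le n^{-r}$ on that event. Once this reduction is secured, the remaining steps are a routine Bernstein-plus-bandwidth computation.
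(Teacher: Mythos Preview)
Your proposal is correct and follows essentially the same route as the paper: reduce $|\hat\delta(x)-\delta(x)|$ to deviations of the six elementary kernel averages, subtract the $O(n^{-\iota})$ bias via \Cref{assmp:kernel}, apply Bernstein's inequality to each centered average, and then use $nh\gg n^{2\iota}$ from \Cref{assmp:bandwidth} to obtain the stretched-exponential bound $\exp(-C\,n^{2(\iota-r)})$, which kills any polynomial prefactor. The paper's argument differs only cosmetically---it works with the unnormalised summands $T_{nxzj}=Y_jK((X_j-x)/h)\1(Z_j=z)$ and a rescaled threshold $\tau_{nxz}\asymp h\,n^{-r}$ rather than your $1/h$-scaled summands, and it is slightly less explicit than you are about the Lipschitz/denominator step.
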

\begin{proof}
First, by a similar decomposition of $\hat \delta(x)-\delta(x)$ as that in the proof of \Cref{thm:test_noX}, it suffices to show 
\begin{align*}
&\sup_x n^k \times \Pr\left\{\left|a_{n}(x,z) - a(x,z)\right|>\lambda_{a}\times n^{-r}\right\}\rightarrow 0;\\
&\sup_xn^k \times \Pr\left\{\left|b_{n}(x,z) - b(x,z)\right|>\lambda_{b}\times n^{-r}\right\}\rightarrow 0;\\
&\sup_xn^k \times \Pr\left\{\left|q_{n}(x,z) - q(x,z)\right|>\lambda_{q}\times n^{-r}\right\}\rightarrow 0,
\end{align*} 
where  $\lambda_{a}$, $\lambda_{b}$ and $\lambda_{q}$ are strictly positive constants, and 
\begin{align*}
&a_{n}(x,z)= \frac{1}{nh}\sum_{j=1}^n Y_j K(\frac{X_j-x}{h}) \mathbb 1(Z_j=z),  \ \ \ a(x,z)=\mathbb E(Y|X=x,Z=z) \times q(x,z);\\
&b_{n}(x,z)= \frac{1}{nh}\sum_{j=1}^n D_j K(\frac{X_j-x}{h}) \mathbb 1(Z_j=z), \ \ \ b(x,z)=\mathbb E(D|X=x,Z=z) \times q(x,z);\\
&q_{n}(x,z)= \frac{1}{nh}\sum_{j=1}^n K(\frac{X_j-x}{h}) \mathbb 1(Z_j=z).
\end{align*}
For expositional simplicity, we only show the first result. It is straightforward that the rest follow a similar argument.

Let $T_{nxzj}=Y_jK(\frac{X_j-x}{h})\mathbb 1(Z_j=z)$ and $\tau_{nxz}= h\times \left[\lambda_{a}  n^{-r}-|\mathbb{E}a_{n}(x,z)-a(x,z)|\right]$.  Note that 
\begin{eqnarray*}
&&\Pr\left[|a_{n}(x,z)-a(x,z)|> \lambda_{a} \times n^{-r}\right]\\
&\leq& \Pr\left[| a_{n}(x,z)-\mathbb{E}a_{n}(x,z)|+|\mathbb{E}a_{n}(x,z)-a(x,z)|> \lambda_{a} \times n^{-r}\right]\\
&=&\Pr\left\{\frac{1}{n}\left|\sum_{j=1}^n\left( T_{nxzj}-\mathbb{E}T_{nxzj}\right)\right|> \tau_{nxz}\right\}.
\end{eqnarray*}
Moreover, by Bernstein's tail inequality,  
\[
\Pr\left\{\frac{1}{n}\left|\sumi \left( T_{xzj}-\mathbb{E} T_{xzj} \right)\right|> \tau_{nxz}\right\} 
\leq 2\exp\left(-\frac{n \times \tau_{nxz}^2}{2\text{Var}\left( T_{nxzj} \right)+\frac{2}{3}\overline K\times \tau_{nxz}}\right).
\]
where $\overline K$ is the upper bound of kernel $K$.

By \Cref{assmp:kernel}, $|\mathbb{E}a_{n}(x,z)-a(x,z)|=O(n^{-\iota})=o(n^{-r})$. Then, for sufficient large $n$, there is $ 0.5\lambda_{a} n^{-r}h\leq \tau_n(x,z)\leq \lambda_{a} n^{-r}h$. Moreover,
\[
\text{Var} \left( T_{nxzj} \right) \leq \mathbb{E} T^2_{nxzj} \leq  \mathbb{E} \big[\mathbb E (Y^2|X)K^2(\frac{X-x}{h})\big]\leq Ch,
\] 
where $C=\sup_{x} \mathbb E [Y^2|X=x]\times \sup_x f_X (x)\times \overline K \times  \int |K(u)| du<\infty$. It follows that
\[
\Pr\left\{\frac{1}{n}\left|\sum_{\ell=1}^n\left( T_{xzj}-\mathbb{E} T_{xzj}\right)\right|>\tau_{nxz}\right\} \leq 2\exp\left(-\frac{\frac{\lambda_{a}}{4}nhn^{-2r}}{2C +\frac{2}{3}\overline K \lambda_{a} n^{-r}}\right).
\]
For sufficiently large $n$, we have $\frac{2}{3}\overline K \lambda_{a} n^{-r}\leq 1$.  Therefore,  for sufficiently large $n$,
\[
\Pr\left\{\frac{1}{n}\Big|\sum_{\ell=1}^n\left(T_{xzj}-\mathbb{E}T_{xzj}\right)\Big|>\tau_{nxz}\right\} \leq 2 \exp\left(-\frac{  n^{2\iota-2r}}{2C +1}\right) =o(n^{-k})
\]where the inequality comes from \Cref{assmp:bandwidth}.  Note that the upper bound does not depend on $x$ or $z$. Therefore,
\[
\sup_{x,z}\Pr\left[|a_{n}(x,z)-a(x,z)|> \lambda_{a} \times n^{-r}\right]=o(n^{-k}).
\]
\end{proof}
\end{appendices}

\newpage
\bibliography{additivity_TC.bib}
\bibliographystyle{econometrica}

\newpage
\section*{Tables}
\begin{table}[!h]
\caption{Empirical level for DGP 1}  \label{table:dsize}
\begin{center}
\begin{tabular}{c | c | ccc | ccc | ccc}
 &  $\alpha$ & \multicolumn{3}{c|}{$0.01$} & \multicolumn{3}{c|}{$0.05$} &\multicolumn{3}{c}{$0.10$} \\
\hline
$N$ & $p\ \backslash\ \rho$ & $0.5$ & $0.7$ & $0.9$  & $0.5$ & $0.7$ & $0.9$  & $0.5$ & $0.7$ & $0.9$ \\
\hline
 \multirow{3}{*}{$500$} 
& $0.3$  &  $0.0010$  &   $0.0010$   &  $0.0010$ &  $0.0190$ & $0.0210$ & $0.0210$ & $0.0590$  &   $0.0640$  &   $0.0640$ \\ 
& $0.5$ & $0.0050$ & $0.0040$ & $0.0020$ & $0.0270$ & $0.0290$ & $0.0230$ & $0.0570$ & $0.0610$ & $0.0590$ \\ 
& $0.7$ & $0.0010$ & $0.0020$ & $0.0010$ & $0.0200$ & $0.0220$ & $0.0190$ & $0.0560$ & $0.0590$ & $0.0530$  \\ 
\hline
 \multirow{3}{*}{$1000$} 
& $0.3$ & $0.0040$ & $0.0080$ & $0.0060$ & $0.0450$ & $0.0390$ & $0.0360$ & $0.0810$ & $0.0730$ & $0.0710$ \\ 
& $0.5$ & $0.0030$ & $0.0030$ & $0.0050$ & $0.0370$ & $0.0360$ & $0.0290$ & $0.0740$ & $0.0700$ & $0.0750$\\ 
& $0.7$ & $0.0050$ & $0.0040$ & $0.0020$ & $0.0430$ & $0.0340$ & $0.0320$ & $0.0780$ & $0.0740$ &  $0.0750$ \\ 
\hline
 \multirow{3}{*}{$2000$} 
& $0.3$ & $0.0050$ & $0.0030$ & $0.0030$ & $0.0390$ & $0.0340$ & $0.0360$ & $0.0970$ & $0.0970$ & $0.0990$\\ 
& $0.5$ & $0.0060$ & $0.0070$ & $0.0080$ & $0.0480$ & $0.0470$ & $0.0470$ & $0.0930$ & $0.0950$ & $0.0930$\\ 
& $0.7$ & $0.0080$ & $0.0100$ & $0.0100$ & $0.0450$ & $0.0460$ & $0.0430$ & $0.0810$ & $0.0800$ & $0.0780$
\end{tabular}
\end{center}
\end{table}

\begin{table}[!h]
\caption{Empirical power for DGP 2: $\gamma = 0.1$}
\label{table:dpower_gamma_1}
\begin{center}
\begin{tabular}{c | c | ccc | ccc | ccc}
 &  $\alpha$ & \multicolumn{3}{c|}{$0.01$} & \multicolumn{3}{c|}{$0.05$} &\multicolumn{3}{c}{$0.10$} \\
\hline
$N$ &  $p\ \backslash\ \rho$ & $0.5$ & $0.7$ & $0.9$  & $0.5$ & $0.7$ & $0.9$  & $0.5$ & $0.7$ & $0.9$ \\
\hline
 \multirow{3}{*}{$500$} 
& $0.3$ & $0.0040$ &  $0.0030$ & $0.0070$ & $0.0360$ &$0.0330$ & $0.0300$ & $0.0880$ & $0.0820$  &   $0.0860$\\ 
& $0.5$ & $0.0100$ & $0.0090$ & $0.0070$ & $0.0380$ & $0.0450$ & $0.0370$ & $0.1050$ & $0.0950$  & $0.0900$\\ 
& $0.7$ & $0.0060$ & $0.0050$ & $0.0020$ & $0.0300$ & $0.0380$ & $0.0330$ & $0.0900$ & $0.0940$ & $0.0800$\\ 
\hline
 \multirow{3}{*}{$1000$} 
& $0.3$ & 0.0170  &   0.0170  &   0.0160 & 0.0930  &   0.0900 &   0.0840 & 0.1670 &   0.1580    & 0.1640 \\ 
& $0.5$ & 0.0210  &   0.0210  &   0.0210 & 0.1050  &   0.0950 &   0.0950 & 0.1650 &   0.1590    & 0.1410\\ 
& $0.7$ & 0.0160  &  0.0130   &  0.0120  & 0.0800  &  0.0780  &   0.0730 & 0.1650  &  0.1590    & 0.1410\\ 
\hline
 \multirow{3}{*}{$2000$} 
& $0.3$ & 0.0630  &   0.0650  &   0.0630  & 0.2260  &   0.1900  &   0.1940 & 0.3500  &   0.3160    & 0.3220\\ 
& $0.5$ & 0.0820  &   0.0690  &   0.0670  & 0.2500  &   0.2330  &   0.2340 & 0.3910  &   0.3710    & 0.3750\\ 
& $0.7$ & 0.0440  &   0.0450  &   0.0400  & 0.1830  &   0.1730  &   0.1740 & 0.3190  &   0.3020    & 0.2970
\end{tabular}
\end{center}
\end{table}

\begin{table}[!h]
\caption{Empirical power for DGP 2: $\gamma = 0.3$}
\label{table:dpower_gamma_3}
\begin{center}
\begin{tabular}{c|c|ccc|ccc|ccc}
 &  $\alpha$ & \multicolumn{3}{c|}{$0.01$} & \multicolumn{3}{c|}{$0.05$} &\multicolumn{3}{c}{$0.1$} \\
\hline
$N$ & $p\ \backslash\ \rho$ & $0.5$ & $0.7$ & $0.9$  & $0.5$ & $0.7$ & $0.9$  & $0.5$ & $0.7$ & $0.9$ \\
\hline
 \multirow{3}{*}{$500$} 
& $0.3$ &  0.0440  &   0.0370   &  0.0350  & 0.2070 & 0.1750  &  0.1820  &  0.3340   &   0.3120   & 0.3130\\ 
& $0.5$ &  0.0520  &   0.0590  &   0.0550 & 0.2380  & 0.2220  &  0.2340  &  0.4190   &   0.3870   & 0.3900\\ 
& $0.7$ & 0.0270  &   0.0230  &   0.0200  &  0.1630 & 0.1470  &  0.1390  &  0.3300   &  0.2980    & 0.2950\\ 
\hline
 \multirow{3}{*}{$1000$} 
& $0.3$ & 0.2610  &   0.1940   &  0.2120  &  0.5850  &   0.5140  &  0.5250  &  0.7460  &   0.6830   & 0.7030\\ 
& $0.5$ & 0.3430  &   0.3000  &   0.3040  &  0.7150   &  0.6390  &   0.6620  &  0.8690   &  0.8090    & 0.8260\\ 
& $0.7$ & 0.1950  &   0.1760   &  0.1870  &  0.5690    &   0.5080   &   0.5320 & 0.7770  &  0.7050 & 0.7180\\ 
\hline
 \multirow{3}{*}{$2000$} 
& $0.3$ &  0.8620  &   0.7620  &   0.7750  &  0.9850  &   0.9540  &   0.9710 & 0.9950 &    0.9870    & 0.9920\\ 
& $0.5$ &  0.9310  &   0.8810  &   0.8930  &  0.9990  &   0.9930  &   0.9940 & 1.0000 &    1.0000    & 1.0000\\ 
& $0.7$ &  0.7970  &   0.7030  &   0.7100  &  0.9770  &   0.9560  &   0.9550 & 0.9970 &    0.9850    & 0.9840
\end{tabular}
\end{center}
\end{table}

\begin{table}[!h]
\caption{Empirical power for DGP 2: $\gamma = 0.5$}
\label{table:dpower_gamma_5}
\begin{center}
\begin{tabular}{c|c|ccc|ccc|ccc}
 &  $\alpha$ & \multicolumn{3}{c|}{$0.01$} & \multicolumn{3}{c|}{$0.05$} &\multicolumn{3}{c}{$0.1$} \\
\hline
$N$ & $p\ \backslash\ \rho$ & $0.5$ & $0.7$ & $0.9$  & $0.5$ & $0.7$ & $0.9$  & $0.5$ & $0.7$ & $0.9$ \\
\hline
 \multirow{3}{*}{$500$} 
& $0.3$ &  0.1800   &  0.1310  &   0.1420 & 0.5120  &   0.4550  &   0.4640 & 0.7050  &  0.6540  &   0.6630\\ 
& $0.5$ &  0.2350   &  0.1870  &   0.2150 & 0.6840  &   0.6200  &   0.6180 & 0.8540  &   0.8170  &   0.8150\\ 
& $0.7$ & 0.1190    &  0.0860  &   0.0920 & 0.4630  &   0.3950  &   0.4390 & 0.6920  &   0.6440  &   0.6850\\ 
\hline
 \multirow{3}{*}{$1000$} 
& $0.3$ & 0.7950  &   0.7190  &   0.7320  & 0.9760 &    0.9540  &   0.9520 & 0.9950  &   0.9880    & 0.9880\\ 
& $0.5$ & 0.9110  &   0.8450  &   0.8540  & 0.9980 &    0.9850  &   0.9920 & 0.9990  &   0.9970    & 0.9990\\ 
& $0.7$ & 0.7240  &   0.6440  &   0.6480  & 0.9740 &    0.9520  &   0.9530 & 0.9980  &   0.9850    & 0.9910\\ 
\hline
 \multirow{3}{*}{$2000$} 
& $0.3$ & 1.0000 &    0.9980  &   0.9990 & 1.0000 &    1.0000  &   1.0000 & 1.0000  &   1.0000    & 1.0000\\ 
& $0.5$ & 1.0000 &    1.0000  &   1.0000 & 1.0000 &    1.0000  &   1.0000 & 1.0000  &   1.0000    & 1.0000\\ 
& $0.7$ & 1.0000 &    1.0000  &   1.0000 & 1.0000 &    1.0000  &   1.0000 & 1.0000  &   1.0000    & 1.0000
\end{tabular}
\end{center}
\end{table}

\begin{table}
\caption{Empirical size for DGP 3: $(p,\rho)=(0.5,0.7)$}
\label{table:csize}
\begin{center}
\begin{tabular}{c|ccc}
$N\ \backslash\ \alpha$ & $0.01$ & $0.05$ & $0.1$ \\
\hline
$1000$ & $0.0040$ & $0.0300$ & $0.0480$ \\
$2000$ & $0.0080$ & $0.0500$ & $0.0880$ \\
$4000$ & $0.0080$ & $0.0360$ & $0.0780$ 
\end{tabular}
\end{center}
\end{table}

\begin{table}[!h]
\caption{Empirical power for DGP 4: $(p,\rho)=(0.5,0.7)$}
\label{table:cpower}
\begin{center}
\begin{tabular}{c|ccc}
$N\ \backslash\ \alpha$ & $0.01$ & $0.05$ & $0.1$ \\
\hline
& \multicolumn{3}{c}{\underline{$\gamma = 0.5$}} \\
$1000$ & $0.0040$ & $0.0280$ & $0.0520$ \\
$2000$ & $0.0080$ & $0.0580$ & $0.1100$ \\
$4000$ & $0.1060$ & $0.3350$ & $0.5180$ \\
& \multicolumn{3}{c}{\underline{$\gamma = 0.7$}} \\
$1000$ & $0.0040$ & $0.0340$ & $0.0620$ \\
$2000$ & $0.0044$ & $0.1880$ & $0.3260$ \\
$4000$ & $0.7240$ & $0.9400$ & $0.9840$
\end{tabular}
\end{center}
\end{table}

\begin{table}[!h]
\caption{Descriptive Statistics for the National JTPA Study}
\label{table:jtpa}
\begin{center}
\begin{tabular}{lccc}
& \multirow{2}{*}{All} & $Z=1$ & $Z=0$ \\
&     & (eligible) & (not eligible) \\
\hline
Men\\
\hspace{4pt} Number of observations  & $5,102$   & $3,399$   & $1,703$ \\
\hspace{4pt} Training ($D=1$)        & $41.87\%$  & $62.28\%$   & $1.12\%$ \\
\hspace{4pt} High school or GED      & $69.32\%$  & $69.26\%$  & $69.43\%$ \\
\hspace{4pt} Married                 & $35.26\%$  & $36.01\%$  & $33.75\%$ \\
\hspace{4pt} Minorities              & $38.38\%$  & $38.69\%$  & $37.76\%$ \\
\hspace{4pt} Work less than 13 weeks in the past year  & $40.02\%$  & $40.28\%$ & $39.05\%$ \\
\hspace{4pt} 30 months earnings & $19,147$ & $19,520$ & $18,404$ \\
\\
Women\\
\hspace{4pt} Number of observations  & $6,102$   & $4,088$   & $2,014$ \\
\hspace{4pt} Training ($D=1$)        & $44.61\%$  & $65.73\%$  & $1.74\%$ \\
\hspace{4pt} High school or GED      & $72.06\%$  & $72.85\%$  & $70.45\%$ \\
\hspace{4pt} Married                 & $21.93\%$  & $22.48\%$  & $20.82\%$ \\
\hspace{4pt} Minorities              & $40.41\%$  & $40.58\%$  & $51.86\%$ \\
\hspace{4pt} Work less than 13 weeks in the past year  & $51.79\%$  & $51.75\%$ & $51.86\%$ \\
\hspace{4pt} 30 months earnings & $13,029$ & $13,439$ & $12,197$ \\
\hline
\end{tabular}
\begin{tablenotes}
\item[] Note: Means are reported in this table for the National JTPA study 30-month earnings sample. 
\end{tablenotes}
\end{center}
\end{table}

\begin{sidewaystable}
\begin{threeparttable}
\caption{Descriptive Statistics for the 1999 and 2000 Censuses}
\begin{center}
\label{table:fertility}
\begin{tabular}{@{\extracolsep{7pt}}lcccccc@{}}
& \multicolumn{3}{c}{1990} & \multicolumn{3}{c}{2000} \\
 \cline{2-4}  \cline{5-7} \\
& \multirow{2}{*}{All} & $Z=1$ & $Z=0$ & \multirow{2}{*}{All} & $Z=1$ & $Z=0$ \\
&     & (twin birth) & (no twin birth) &     & (twin birth) & (no twin birth) \\
\hline
Observations          & $602,767$ & $6,524$  & $596,243$ & $573,437$ & $8,569$  & $564,868$ \\
Number of children    & $1.9276$  & $2.5318$ & $1.9209$ & $1.8833$  & $2.5196$ & $1.8734$ \\
At least two children ($D=1$) & $0.6500$  & $1.0000$ & $0.6461$ & $0.6163$  & $1.0000$ & $0.6104$ \\
\\
Mother\\
\hspace{4pt} Age in years          & $29.7894$ & $29.9530$ & $29.7876$ & $30.0562$ & $30.3943$ & $30.0510$ \\
\hspace{4pt} Years of education    & $12.9196$ & $12.9623$ & $12.9191$ & $13.1131$ & $13.2615$ & $13.1108$ \\
\hspace{4pt} Black                 & $0.0637$  & $0.0757$  & $0.0636$ & $0.0724$  & $0.0816$  & $0.07228$ \\
\hspace{4pt} Asian                 & $0.0326$  & $0.0321$  & $0.0326$ & $0.0447$  & $0.0335$  & $0.0448$ \\
\hspace{4pt} Other Races           & $0.0537$  & $0.0592$  & $0.0536$ & $0.0912$  & $0.0806$  & $0.0914$ \\
\hspace{4pt} Currently at work     & $0.5781$  & $0.5444$  & $0.5785$ & $0.5629$  & $0.5132$  & $0.5637$ \\
\hspace{4pt} Usual hours per work  & $24.5660$ & $23.3537$ & $24.5795$ & $25.1400$ & $23.0491$ & $25.1723$ \\

\hspace{4pt} Wage or salary income last year & $8942$ & $8593$ & $8946$ & $14200$ & $13757$ & $14206$ \\

\\
Father \\
\hspace{4pt} Age in years          & $32.5358$ & $32.7534$ & $32.5333$ & $32.9291$ & $33.3102$ & $32.9232$ \\
\hspace{4pt} Years of education    & $13.0436$ & $13.0748$ & $13.0432$ & $13.0331$ & $13.1806$ & $13.0308$ \\
\hspace{4pt} Black                 & $0.0671$  & $0.0796$  & $0.0670$ & $0.0800$  & $0.0945$  & $0.0798$ \\
\hspace{4pt} Asian                 & $0.0291$  & $0.0263$  & $0.0292$  & $0.0402$  & $0.0318$  & $0.0403$ \\

\hspace{4pt} Other Races           & $0.0488$  & $0.0529$  & $0.0488$ & $0.0919$  & $0.0802$  & $0.0921$ \\
\hspace{4pt} Currently at work     & $0.8973$  & $0.8922$  & $0.8974$ & $0.8512$  & $0.8584$  & $0.8511$ \\
\hspace{4pt} Usual hours per work  & $42.7636$ & $42.7704$ & $42.7635$ & $43.8805$ & $43.8789$ & $43.8805$ \\
\hspace{4pt} Wage or salary income last year & $27020$ & $28039$ & $27010$ & $38041$ & $41584$ & $37987$ \\
\\
Parents \\
Wages or salary income last year & $35,963$ & $36,632$ & $35,956$ & $52,241$ & $55,342$ & $52,193$\\
\bottomrule
\end{tabular}
\begin{tablenotes}
\item[] Note: Data from the $1\%$ and $5\%$ PUMS in 1990 and 2000. Own calculations using the PUMS sample weights. The sample consists of married mother between 21 and 35 years of age with at least one child.
\end{tablenotes}
\end{center}
\end{threeparttable}
\end{sidewaystable}

\end{document}